\colorlet{green}{green!50!black}
\newcounter{lpnumber} \setcounter{lpnumber}{0}
\newenvironment{linearprogram}[1]
{ \stepcounter{lpnumber}
  \begin{gather} #1 \tag{LP\arabic{lpnumber}} \\[-5ex] \notag
  \end{gather}
  \hspace{1.5cm} subject to \\[-3ex]
  \align }
{ \endalign }
\newcommand{\minimize}[1]{\text{minimize} \ #1}
\newcommand{\maximize}[1]{\text{maximize} \ #1}
\newcommand{\cost}{\mathsf{cost}}
\newcommand{\wt}{\mathsf{wt}}
\newcommand{\Nbr}{\mathsf{Nbr}}
\newtheorem{new-claim}{Claim}
\begin{document}
\title{Maximum Matchings and Popularity}
\author{Telikepalli Kavitha}
\institute{Tata Institute of Fundamental Research, Mumbai, India\\\email{kavitha@tifr.res.in}}
\maketitle
\pagestyle{plain}

\begin{abstract}
Let $G = (A \cup B,E)$ be a bipartite graph where every node has a strict ranking of its neighbors. For every node, its preferences
over neighbors extend naturally to preferences over matchings.
Matching $N$ is ``more popular'' than matching $M$ if the number of nodes that prefer $N$ to $M$ is more than the number that prefer 
$M$ to $N$. 
A maximum matching $M$ in $G$ is a {\em popular max-matching} if there is no maximum matching in $G$ that is more 
popular than $M$. Such matchings are relevant in applications where the set of admissible solutions is the set of maximum matchings
and we wish to find a {\em best} maximum matching as per node preferences. 

It is known that a popular max-matching always exists in $G$. Here we show
a compact extended formulation for the popular max-matching polytope. So when there are edge costs, 
a min-cost popular max-matching can be computed in polynomial time. This is in contrast to the min-cost popular matching problem which is 
known to be NP-hard. We also consider {\em Pareto-optimality}, which is a relaxation of popularity, and show that computing a min-cost Pareto-optimal matching/max-matching is NP-hard.
\end{abstract}

\section{Introduction}
\label{sec:intro}
Consider a bipartite graph $G = (A \cup B, E)$ on $n$ nodes and $m$ edges where every node ranks its neighbors in a strict order of 
preference. The goal is to find an optimal matching in $G$ and the typical notion of optimality in such an instance is 
{\em stability}. A matching $M$ is stable if no edge blocks $M$; an edge $(a,b)$ blocks $M$ if $a$ and $b$ prefer 
each other to their respective assignments in $M$. Gale and Shapley~\cite{GS62} in 1962 showed that stable matchings always exist in
$G$ and gave a linear time algorithm to find one.

 Stable matchings and their variants are used in many real-world applications such as matching students to schools and colleges~\cite{AS03,BCCKP18}. Observe that stability empowers every edge with the power to block a matching -- thus this is a rather strict notion. Popularity 
is a natural relaxation of stability that captures collective decision. Preferences of a node over its neighbors extend naturally to 
preferences over matchings as follows.
 We say node $u$ prefers matching $M$ to matching $N$ if (i)~$u$ is matched in $M$ and unmatched in $N$ or (ii)~$u$ is matched in 
both $M$ and $N$ and it prefers its partner in $M$ to its partner in $N$.

 Let $\phi(M,N)$ be the number of nodes that prefer $M$ to $N$ and similarly, let $\phi(N,M)$ be the number of nodes that prefer $N$ 
to $M$. We say $N$ is more popular than $M$ if $\phi(N,M) > \phi(M,N)$. 

\begin{definition}
\label{def:popular}
A matching $M$ is popular if there is no matching in $G$ that is more popular than $M$, i.e.,  $\Delta(M,N) \ge 0$ for all matchings $N$ in $G$, where $\Delta(M,N) = \phi(M,N) - \phi(N,M)$.
\end{definition}

The notion of popularity was introduced by G\"ardenfors~\cite{Gar75} in 1975 and he observed that every stable matching is popular. One of the main merits of popularity is that it allows larger matchings compared to stable matchings. All stable matchings have the same size~\cite{GS85} which may be as low as $|M_{\max}|/2$ (where $M_{\max}$ is a maximum matching in $G$) and it is known that there is always a popular matching of size at least $2|M_{\max}|/3$ and there are simple instances with no larger popular matching~\cite{Kav14}. 

\paragraph{\bf Maximum Matchings.}
In the domain of {\em matchings under preferences}, there are applications where the size of the matching computed is of utmost importance,
say, in assigning doctors to hospitals during a pandemic. For instance, during the covid-19 pandemic in Mumbai, public hospitals were overwhelmed with rising number of patients and were severely short-staffed; so doctors associated with private clinics were asked to also work in public hospitals~\cite{indian-express,the-week}. This problem can be modeled as described below.\footnote{In the doctors-hospitals setting, one typically seeks a many-to-one matching since a hospital may need several doctors. Here we focus on one-to-one matchings.}

Let $A$ be the set of doctors and $B$ be the set of hospitals: edge $(a,b)$ denotes that doctor $a$ and hospital~$b$ are interested in each other. Every doctor has a strict ranking of possible hospitals and similarly, every hospital has a strict ranking of possible doctors. We want the maximum number of doctors to get assigned to hospitals, i.e., we do not wish to compromise at all on the size of our matching. So the set of admissible solutions is the set of maximum matchings and among all maximum matchings, we wish to find a {\em best} matching as per node preferences. In the context of popularity, this means no maximum matching is ``more popular'' than the maximum matching we find. So what we seek is a {\em popular max-matching}, which is defined below.

\begin{definition}
\label{def:pop-max-mat}
Call a maximum matching $M$ in $G = (A\cup B, E)$ a popular max-matching if $\Delta(M,N) \ge 0$ for all maximum matchings $N$ 
in $G$.
\end{definition}

Note that a popular max-matching $M$ need not be popular in the set of {\em all} matchings---so infeasible matchings 
(i.e., those that are not maximum) may be more popular than $M$, however $M$ is popular {\em within} the set of 
maximum matchings in $G$. The relation ``more popular than'' is not transitive and since stable matchings need not be maximum matchings,
it is not clear if every instance admits a popular max-matching. This question was considered in \cite{Kav14} where
it was shown that popular max-matchings always exist in $G = (A \cup B, E)$ and one such matching can be computed in $O(mn)$ time.

Suppose there is a cost function $c: E \rightarrow \mathbb{R}$ where 
$c(e)$ is the cost of including the edge $e$ in our matching; for any matching $M$,  $c(M) = \sum_{e \in M}c(e)$.
We consider the problem of computing a min-cost popular max-matching in $G$.
Solving the min-cost popular max-matching problem efficiently implies efficient algorithms for a whole host of  popular
max-matching problems such as finding one with forced/forbidden edges or one with max-utility or one with min-regret.
In general, a cost function allows us to ``access'' the entire set of popular max-matchings;
note that $G$ may have more than $2^n$ such matchings~\cite{Thu02}. 

While it is easy to find a min-size/max-size popular matching in $G$~\cite{GS62,HK11,Kav14}, it is NP-hard to compute a min-cost popular
matching~\cite{FKPZ18}. On the other hand, there are several polynomial time algorithms to compute a min-cost stable matching and some 
variants of this problem~\cite{Fed92,Fed94,Fle03,ILG87,Rot92,TS98,VV89}. Moreover, the stable matching polytope of $G$
has an elegant linear-size description in $\mathbb{R}^m$~\cite{Rot92}.

Let ${\cal M}_G$ denote the popular max-matching polytope of $G$. That is, the extreme points of the polytope 
${\cal M}_G$ are all and only the edge incidence vectors of popular max-matchings in $G$.
A compact description of ${\cal M}_G$ (or some extension\footnote{A polytope $Q$ that linearly projects to a polytope $P$ is an \emph{extension} of $P$ and a linear description of $Q$ is an \emph{extended formulation} for $P$. The minimum size of an extension of $P$ is the \emph{extension complexity} of $P$.} of it) implies a polynomial time algorithm to compute a min-cost popular max-matching. Our main theorem is 
that the polytope ${\cal M}_G$ has a compact extended formulation. So unlike the min-cost popular matching problem, interestingly and quite
surprisingly, the  min-cost popular max-matching problem is tractable.

\begin{theorem}
\label{thm:first}
Given $G = (A \cup B, E)$ where nodes have strict preferences and $c: E \rightarrow \mathbb{R}$, 
a min-cost popular max-matching can be found in polynomial time.
\end{theorem}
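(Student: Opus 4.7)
The plan is to exhibit a polynomial-size polytope $Q$ in an extended space that projects onto ${\cal M}_G$, reducing min-cost popular max-matching to a single polynomial-size linear program. The starting observation is that, because $G$ is bipartite, the polytope of maximum matchings is integral (described by $x(\delta(v)) \le 1$, $x_e \ge 0$, and $\sum_e x_e = |M_{\max}|$) and every maximum matching covers the same vertex set. Hence for any fixed max-matching $M$, the quantity $\max_N \Delta(N,M)$ over max-matchings $N$ equals the optimum of a max-weight max-matching LP with edge weights $w^M_{ab} := \vote_a(b, M(a)) + \vote_b(a, M(b))$. By strong duality, $M$ is a popular max-matching iff there exist $\alpha_v \ge 0$ for each $v \in A \cup B$ and a scalar $z \in \mathbb{R}$ with $\alpha_a + \alpha_b + z \ge w^M_{ab}$ for every $(a,b) \in E$ and $\sum_v \alpha_v + |M_{\max}|\, z \le 0$.

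The second step is to linearize the witness in $x$. Since $M$ is integer, $\vote_a(b, M(a)) = \sum_{b'} x_{ab'}\vote_a(b, b')$, so the inequality $w^M_{ab} \le \alpha_a + \alpha_b + z$ is jointly linear in $(x,\alpha,z)$. I would take $Q$ to be the polytope defined by: (i) the bipartite max-matching constraints on $x$; (ii) $\alpha_v \ge 0$ for every $v$; (iii) $\alpha_a + \alpha_b + z \ge \sum_{b'} x_{ab'}\vote_a(b,b') + \sum_{a'} x_{a'b}\vote_b(a,a')$ for every $(a,b) \in E$; and (iv) $\sum_v \alpha_v + |M_{\max}|\, z \le 0$. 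This has $O(n+m)$ variables and constraints, and by the LP-duality characterization every popular max-matching, paired with a witness, lies in $Q$, so $\pi_x(Q) \supseteq {\cal M}_G$.

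The remaining---and main---step is to show that $\min\{c^\top x : (x,\alpha,z) \in Q\}$ is attained at an integer $x^*$, which is then by construction the indicator of a popular max-matching. The intended argument takes a vertex $(x^*,\alpha^*,z^*)$ of $Q$ and expresses $x^*$ as a convex combination of integer max-matchings along its face of the matching polytope. Because constraints (iii) and (iv) are linear in $x$, the same $(\alpha^*,z^*)$ remains feasible ``on average'' over this decomposition, but may violate an individual edge inequality at an extreme matching. To close the gap, I would use alternating-cycle exchanges on $\mathrm{supp}(x^*)$: rotating mass around an alternating cycle keeps the matching size fixed and changes both sides of each tight inequality in (iii) by the same amount, so one can push $x^*$ to an integer extreme of its face without destroying feasibility. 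The delicate point---and the main obstacle---is controlling the joint interaction between the size equality $\sum_e x_e = |M_{\max}|$, the dual bound (iv), and the edge-wise witnesses (iii) during these exchanges, quite possibly requiring a small adjustment to $(\alpha^*, z^*)$ using the known $\{-1,0,1,2\}$-structure of popularity witnesses. Once this integrality is established, Theorem~\ref{thm:first} follows immediately from polynomial-time linear programming over $Q$.
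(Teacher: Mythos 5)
Your proposal takes a genuinely different route from the paper---a self-witnessing polytope in $(x,\alpha,z)$-space rather than a combinatorial extension---but it has two concrete gaps. First, the reduction rests on the claim that every maximum matching of a bipartite graph covers the same vertex set; this is false (in the path $a-b-a'$ the two maximum matchings $\{(a,b)\}$ and $\{(a',b)\}$ expose $a'$ and $a$ respectively). Consequently $\sum_{e\in N}w^M_e$ equals $\Delta(N,M)$ \emph{plus} the number of vertices matched in $M$ but exposed in $N$, so your LP value only upper-bounds $\max_N\Delta(N,M)$: a dual witness certifies popularity, but a popular max-matching need not admit one, and $\pi_x(Q)\supseteq{\cal M}_G$ is not established. (The paper controls exactly this by restricting \eqref{LP3} to perfect matchings of the subgraph on $M$-covered vertices and then imposing the boundary conditions $\alpha_a=0$ on $\Nbr(U_B)$ and $\alpha_b=2(n_0-1)$ on $\Nbr(U_A)$ in Theorem~\ref{thm:main}, which is where the maximality of $M$ is actually used.) Second, and more fundamentally, the integrality of $\min\{c^\top x:(x,\alpha,z)\in Q\}$ is the entire content of the theorem, and the sketch does not supply it. The identical construction without the size constraint yields the popular fractional matching polytope, which is only half-integral; its integer hull has near-exponential extension complexity and min-cost popular matching is NP-hard, so an argument of the form ``rotate along alternating cycles and slightly adjust $(\alpha^*,z^*)$'' provably cannot succeed in that setting, and nothing in your proposal isolates what the size constraint changes. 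This is precisely the step where a new idea is required.

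The paper does not try to make the witness polytope integral. It builds an auxiliary instance $G^*$ with $n_0=|A|$ copies of each $a\in A$ chained through dummy nodes, shows every stable matching of $G^*$ projects to a popular max-matching of $G$ (Theorem~\ref{thm:stable}), and then proves surjectivity: every popular max-matching admits an integral dual certificate $\vec{\alpha}$ with $\alpha_a\in\{0,-2,\ldots,-2(n_0-1)\}$ and $\alpha_b\in\{0,2,\ldots,2(n_0-1)\}$ satisfying the boundary conditions above (Theorem~\ref{thm:main}), and this certificate is read off as a level assignment producing a stable preimage in $G^*$ (Theorem~\ref{lem:surjective}). Integrality then comes for free from Rothblum's description of the stable matching polytope of $G^*$. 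Your dual certificates are the right intuition---they are essentially the paper's $\vec{\alpha}$---but there they index levels in a combinatorial gadget whose polytope is already integral, rather than cutting out an integral region directly in $(x,\alpha,z)$-space.
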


\noindent{\bf A relaxation of popularity.}
Pareto-optimality is a natural relaxation of popularity that any
reasonable matching in this domain should satisfy. If $M$ is a matching that is not Pareto-optimal then there is a
{\em better} matching $N$, i.e., no node is worse-off in $N$ than in $M$ and at least one node is better-off.
The {\em unpopularity factor} of $M$ is defined as follows~\cite{McC06}:
$u(M) = \max_{N\ne M}\phi(N,M)/\phi(M,N)$.

A matching $M$ is Pareto-optimal if $u(M) < \infty$. 
So there is no matching $N$ such that $\phi(N,M) > 0$ and $\phi(M,N) = 0$.
Observe that a popular matching $M$ satisfies $u(M) \le 1$.
A maximum matching $M$ that satisfies $u(M) < \infty$ is a
{\em Pareto-optimal max-matching}. 
We show that it is NP-hard to find a min-cost Pareto-optimal matching/max-matching (see Theorem~\ref{thm:pareto-opt}).

\paragraph{\bf Background.}
Many algorithmic questions in popular matchings have been investigated in the last 10-15 years;
we refer to~\cite{Cseh} for a survey. In the domain of popular matchings with two-sided preferences (every node has
a preference order ranking its neighbors), other than a handful of positive results~\cite{CK18,HK11,Kav14,Kav20},
most optimization problems (incl. finding a popular matching that is not a min-size/max-size popular matching)
have turned out to be NP-hard~\cite{FKPZ18}. 
Computing a min-cost {\em quasi-popular} matching $M$, i.e., $u(M)\le 2$, is also NP-hard~\cite{FK20}.

Compact extended formulations for the {\em dominant} matching\footnote{These are popular matchings that are more popular than all larger matchings.} polytope~\cite{CK18,FK20} and the popular fractional matching polytope~\cite{Kav16,KMN09} are  known but the popular matching polytope has near-exponential extension complexity~\cite{FK20}.
Thus in contrast to stable matchings, the landscape of popular matchings has only a few positive results. 
However, within the set of maximum matchings, the notion of popularity seems more suitable than stability for algorithmic results.
In the context of stability, a {\em best} maximum matching would be one with the least number of blocking edges. The problem of finding a
maximum matching with the least number of blocking edges was considered in \cite{BMM10} and was shown to be NP-hard. 

Though an $O(mn)$ time algorithm to find a popular max-matching in $G$ is known~\cite{Kav14}, there are no previous results on the {\em optimization} variant of this problem, i.e., to find a min-cost popular max-matching, which is a natural problem here.
It is common in this domain to have an efficient algorithm to find a max-size matching in some class -- say, Pareto-optimal matchings
for one-sided preferences (only nodes in $A$ have preferences) studied in  \cite{ACMM04} or popular matchings~\cite{HK11,Kav14}, however
finding a {\em min-cost} matching in such a class is NP-hard~\cite{ACMM04,FKPZ18}.
Thus the existence of an efficient algorithm to find some  popular max-matching was no guarantee on the tractability of the
min-cost popular max-matching problem.

\paragraph{\bf Our Techniques.}
An algorithm called the ``$|A|$-level Gale-Shapley algorithm'' was given in \cite{Kav14} to find a popular max-matching in $G = (A \cup B, E)$. As we show in Section~\ref{sec:char}, this algorithm is equivalent to running Gale-Shapley algorithm in an auxiliary instance $G^*$ with $|A|$ copies of each node in $A$ and the proof in \cite{Kav14} can be easily adapted to show that there is a linear map from the set of stable matchings in $G^*$ to the set of popular max-matchings in $G$ (see Theorem~\ref{thm:stable}). Our novel contribution here is to show that this map is {\em surjective}, i.e., every popular max-matching in $G$ is the image of a stable matching in $G^*$

Our proof that every popular max-matching in $G$ has a stable matching in $G^*$ as its preimage is based on LP-duality. Here we introduce the notion of dual certificates for popular max-matchings---dual certificates for popular matchings are well-understood: these are elements in $\{0,\pm 1\}^n$~\cite{Kav16}. We show in Section~\ref{sec:dual-certificate} that every popular max-matching in $G$ has a dual certificate $\vec{\alpha} \in \mathbb{Z}^n$ where $\alpha_a \in \{0,-2,-4,\ldots\}$ for $a \in A$ and $\alpha_b \in \{0,2,4,\ldots\}$ for $b \in B$
and $\alpha$-values for some nodes can be restricted to fixed values (see Theorem~\ref{thm:main}). 
Given a popular max-matching $M$ in $G$ and such a dual certificate $\vec{\alpha}$, we use $\vec{\alpha}$ to construct a preimage $S$ 
for $M$ and prove that $S$ is a stable matching in $G^*$.

\section{Popular Max-Matchings}
\label{sec:char}

In this section we first show a simple characterization of popular max-matchings. Then we show a method to construct matchings that satisfy this characterization.

Let $M$ be any matching in $G = (A \cup B, E)$. The following edge weight function $\wt_M$ will be useful here. For any $(a,b) \in E$:
\begin{equation*} 
\mathrm{let}\ \wt_M(a,b) = \begin{cases} 2   & \text{if\ $(a,b)$\ blocks\ $M$;}\\
	                     -2 &  \text{if\ $a$\ and\ $b$\ prefer\ their\ assignments\ in\ $M$\ to\ each\ other;}\\			
                              0 & \text{otherwise.}
\end{cases}
\end{equation*}

So $\wt_M(e) = 0$ for every $e \in M$.
For any cycle/path $\rho$ in $G$, let $\wt_M(\rho) = \sum_{e\in\rho}\wt_M(e)$. Theorem~\ref{thm:char} uses this edge weight function to 
characterize popular max-matchings.

\begin{theorem}
  \label{thm:char}
  For any maximum matching $M$ in $G$, $M$ is a popular max-matching if and only if 
  (1)~there is no alternating cycle $C$ with respect to $M$ such that $\wt_M(C) > 0$ and
  (2)~there is no alternating path $p$ with an unmatched node as an endpoint such that $\wt_M(p) > 0$.
\end{theorem}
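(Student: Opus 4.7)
The plan is, for any other maximum matching $N$ in $G$, to decompose the symmetric difference $M\oplus N$ into alternating cycles and paths with respect to $M$ and relate the weight $\wt_M$ of these components to the vote difference $\phi(N,M)-\phi(M,N)$. Since both $M$ and $N$ are maximum, no component of $M\oplus N$ is augmenting for either matching, so each path $p$ in $M\oplus N$ has exactly one endpoint unmatched in $M$ and one endpoint unmatched in $N$. The central identity I would establish is
\[
\phi(N,M) - \phi(M,N) \;=\; \sum_{C}\wt_M(C) \;+\; \sum_p\bigl(\wt_M(p)-1\bigr),
\]
where the sums run over the cycles $C$ and paths $p$ of $M\oplus N$. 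This holds because for each edge $(a,b)\in N\setminus M$ the value $\wt_M(a,b)\in\{-2,0,2\}$ is exactly the net vote contribution of $a$ and $b$ (votes for $N$ minus votes for $M$) when they compare their $M$- and $N$-partners, while edges of $M\cap(M\oplus N)$ contribute weight $0$. On a cycle every node lies on exactly one $N\setminus M$ edge, so all its votes are captured; on a path the $M$-unmatched endpoint is captured similarly, whereas the $N$-unmatched endpoint is incident only to an $M$-edge on the path and votes for $M$ but contributes nothing to $\wt_M(p)$, yielding the $-1$ correction per path.

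For the $(\Leftarrow)$ direction, assuming (1) and (2): every cycle in $M\oplus N$ is an alternating cycle with respect to $M$, hence has $\wt_M(C)\le 0$; every path in $M\oplus N$ has an $M$-unmatched endpoint, hence has $\wt_M(p)\le 0$. The identity then gives $\phi(N,M)-\phi(M,N)\le 0$, i.e.\ $\Delta(M,N)\ge 0$, for every maximum matching $N$, so $M$ is a popular max-matching.

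For the $(\Rightarrow)$ direction, I would argue the contrapositive by constructing a strictly more popular max matching from any violating structure. A cycle $C$ with $\wt_M(C)>0$ yields $M' = M\oplus C$, a max matching with $\Delta(M,M') = -\wt_M(C)<0$. For an alternating path $p$ from an $M$-unmatched node $u_0$ with $\wt_M(p)>0$: if $p$ already ends in an $M$-edge, $M' = M\oplus p$ is a max matching directly; otherwise maximality of $M$ forces the other endpoint $u_\ell$ of $p$ to be $M$-matched, and a short case analysis on the bipartite alternation pattern rules out $M(u_\ell)$ lying on $p$, so I would append the $M$-edge $(u_\ell,M(u_\ell))$ to form a swap path $p'$ with $\wt_M(p')=\wt_M(p)>0$. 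Setting $M'=M\oplus p'$, the fact that $\wt_M$ takes only even values forces $\wt_M(p')\ge 2$, so $\Delta(M,M') = 1 - \wt_M(p')\le -1$, contradicting popularity.

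The main obstacle is establishing the identity cleanly, in particular tracking the $-1$ correction per path that comes from the $N$-unmatched endpoint's vote for $M$ not being captured by any edge weight on the path; once that identity is in place, both directions of the theorem follow uniformly.
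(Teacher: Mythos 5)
Your proposal is correct and follows essentially the same route as the paper: decompose $M\oplus N$ into alternating cycles and paths to bound $\Delta(N,M)$ for the converse, and flip $M$ along a positive-weight cycle or path to contradict popularity for the forward direction. Your exact vote-counting identity (with the $-1$ per path) and your explicit extension of a path ending in a non-matching edge by the edge $(u_\ell,M(u_\ell))$ are slightly more careful renderings of steps the paper treats implicitly, but they do not change the argument.
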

\begin{proof}
Let $M$ be a popular max-matching. We need to show that conditions~(1) and (2) given in the theorem statement hold.
Suppose not. Then there exists either an alternating path with an unmatched node as an endpoint or an alternating cycle
wrt $M$ (call this path/cycle $\rho$) such that $\wt_M(\rho) > 0$.
Since $\wt_M(e) \in \{0, \pm 2\}$, $\wt_M(\rho) \ge 2$.
  
  Consider $N = M \oplus \rho$. This is a maximum matching in $G$ and observe that $\Delta(N,M) \ge \wt_M(\rho) -1$. We are subtracting
  $1$ here to count for that endpoint of $\rho$ (when $\rho$ is a path) that is matched in $M$ but will become unmatched in $N$. Since
  $\wt_M(\rho) \ge 2$, $\Delta(N,M) \ge 1$. So $N$ is more popular than $M$: this is a contradiction to $M$'s popularity within
  the set of maximum matchings. Thus conditions~(1) and (2) have to hold.

To show the converse, suppose $M$ is a maximum matching that obeys conditions~(1) and (2). Consider the symmetric difference
$M \oplus N$, where $N$ is any maximum matching in $G$ and let $C$ be any alternating cycle here. We know from (1) that $\wt_M(C) \le 0$.
Let $p$ be any alternating path in $M \oplus N$. Since $M$ and $N$ are maximum matchings, $p$ is an alternating path with
exactly one node not matched in $M$ as an endpoint. We know from (2) that $\wt_M(p) \le 0$. So we have:
  \[ \Delta(N,M) \ \le \ \sum_{\rho\in M \oplus N}\wt_M(\rho) \ \le\ 0.\]
Thus $M$ is popular within the set of maximum matchings in $G$. \qed 
\end{proof}  

\smallskip

\noindent{\bf An auxiliary instance $G^*$.}
We will now construct a new instance $G^* = (A^* \cup B^*, E^*)$ such that every stable matching in $G^*$ maps to a maximum matching in $G$
that satisfies properties~(1) and (2) given in Theorem~\ref{thm:char}. The structure of the instance $G^*$ is inspired by an instance used in \cite{CK18} whose
stable matchings map to dominant matchings in $G$. 

We describe the node sets $A^*$ and $B^*$. Let $n_0 = |A|$. 
For every $a \in A$, the set $A^*$ has $n_0$ copies of $a$: call them 
$a_0,\ldots,a_{n_0-1}$. So $A^* = \cup_{a\in A}\{a_0,\ldots,a_{n_0-1}\}$. 

\smallskip

Let $B^* = \cup_{a\in A}\{d_1(a),\ldots,d_{n_0-1}(a)\}\cup\{\tilde{b}: b \in B\}$, where
$\tilde{B} = \{\tilde{b}: b \in B\}$ is a copy of the set $B$.
So along with nodes in $\tilde{B}$, the set $B^*$ also contains $n_0-1$ nodes 
$d_1(a),\ldots,d_{n_0-1}(a)$ for each $a \in A$. These will be called {\em dummy} nodes.

\paragraph{\bf The edge set.}
For each $(a,b) \in E$, the edge set $E^*$ contains $n_0$ edges $(a_i,\tilde{b})$ for $0 \le i \le n_0-1$. For each $a \in A$
and $i \in \{1,\ldots,n_0-1\}$, $E^*$ also has $(a_{i-1},d_i(a))$ and $(a_i,d_i(a))$. 
The purpose of $d_1(a),\ldots,d_{n_0-1}(a)$ is to ensure that in any stable matching 
in $G^*$, at most one node among $a_0,\ldots,a_{n_0-1}$ is matched to a neighbor in $\tilde{B}$.
For each $i \in \{1,\ldots,n_0-1\}$, the preference order of $d_i(a)$ is $a_{i-1} \succ a_i$.

\paragraph{\bf Preference orders.}
Let $a$'s preference order in $G$ be $b_1 \succ \cdots \succ b_k$. Then $a_0$'s preference order in $G^*$ is $\tilde{b}_1 \succ \cdots \succ \tilde{b}_k \succ d_1(a)$, i.e., it is analogous to $a$'s preference order in $G$ with $d_1(a)$ added as $a_0$'s last choice.

\begin{itemize}
\item  For $i \in \{1,\ldots,n_0-2\}$, the preference order of $a_i$ in $G^*$ is as follows: 
$d_i(a) \succ \tilde{b}_1 \succ \cdots \succ \tilde{b}_k \succ d_{i+1}(a)$. So $a_i$'s top choice is $d_i(a)$ and last choice is
$d_{i+1}(a)$.

\item The preference order of $a_{n_0-1}$ is $d_{n_0-1}(a) \succ \tilde{b}_1 \succ \cdots \succ \tilde{b}_k$.
\end{itemize}

Since each of $a_0,\ldots,a_{n_0-2}$ and $d_1(a),\ldots,d_{n_0-1}(a)$ is a top choice neighbor for some node, every stable matching in $G^*$ 
has to match all these nodes. So the only node among 
$a_0,\ldots,a_{n_0-1},d_1(a),\ldots,d_{n_0-1}(a)$ that can possibly be left unmatched in a stable matching in $G^*$ is $a_{n_0-1}$.

Consider any $b \in B$ and let its preference order in $G$ be $a \succ \cdots \succ z$. Then the preference order of $\tilde{b}$
in $G^*$ is
\[\underbrace{a_{n_0-1}\succ\cdots \succ z_{n_0-1}}_{\text{all subscript $n_0-1$ neighbors}}\succ\underbrace{a_{n_0-2} \succ \cdots \succ z_{n_0-2}}_{\text{all subscript $n_0-2$ neighbors}}\succ\cdots\succ\underbrace{a_0\succ\cdots \succ z_0}_{\text{all subscript 0 neighbors}} \]

That is, $\tilde{b}$'s preference order in $G^*$ is all its subscript~$n_0-1$ neighbors, followed by all its subscript~$n_0-2$
neighbors, so on, and finally, all its subscript~0 neighbors. For each $i \in \{0,\ldots,n_0-1\}$: within all subscript~$i$ neighbors,
the order of preference for $\tilde{b}$ in $G^*$ is $b$'s order of preference in $G$.

\paragraph{\bf The set $S'$.} For any stable matching $S$ in $G^*$, define $S' \subseteq E$ to be the set of edges
obtained by deleting edges in $S$ that are incident to dummy nodes 
and replacing any edge $(a_i,\tilde{b}) \in S$ with the original edge $(a,b) \in E$.
Since $S$ matches at most one node among $a_0,\ldots,a_{n_0-1}$ to a neighbor in $\tilde{B}$, the set $S'$ is a valid matching in $G$.
The proof of Theorem~\ref{thm:stable} is based on the proof of correctness of the $|A|$-level Gale-Shapley algorithm (from \cite{Kav14})
in the original instance $G$. 

\begin{theorem}
  \label{thm:stable}
  Let $S$ be a stable matching in $G^*$. Then $S'$ is a popular max-matching in $G$.
\end{theorem}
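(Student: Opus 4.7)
The plan is to apply Theorem~\ref{thm:char}, reducing the task to showing $S'$ is a maximum matching in $G$ and satisfies conditions~(1) and~(2) of that theorem. From the $G^*$ preference structure, every dummy $d_i(a)$ and every copy $a_0,\ldots,a_{n_0-2}$ must be matched by $S$ (each being some node's top choice), and a short case analysis shows that for each $a \in A$ at most one copy is matched by $S$ to $\tilde{B}$. So $S'$ is a valid matching in $G$. Writing $\ell_a$ for the index of that copy (or $n_0-1$ if none) --- the \emph{level} of $a$ --- the matching $S$ takes the canonical form $a_i \mapsto d_{i+1}(a)$ for $i<\ell_a$, $a_i \mapsto d_i(a)$ for $i>\ell_a$, and $a_{\ell_a}$ matched to some $\tilde{b}$ (or unmatched, possible only when $\ell_a=n_0-1$).

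That $S'$ is \emph{maximum} will follow by combining the Rural Hospitals theorem for $G^*$ --- all stable matchings have the same set of matched nodes, hence the same $|S'|$ --- with the result of \cite{Kav14} that the $|A|$-level Gale--Shapley algorithm, which is exactly $A^*$-proposing Gale--Shapley in $G^*$, produces a stable matching whose projection is a maximum matching in $G$. For conditions~(1) and~(2), I would use a \emph{level bound} on $\delta_i := \ell_{a_{i+1}}-\ell_{a_i}$ for each non-matching edge $(a_i,b_i)$ of an alternating cycle or path, where $(a_i,b_{i-1}) \in S'$ and $S'(b_i)=a_{i+1}$. Stability of $S$ at the lifted edge $(a_{i,\ell_{a_i}},\tilde{b}_i)$ yields $\delta_i \ge 1$ when the edge has weight $+2$ and $\delta_i \ge 0$ in either weight-$0$ case; a subtler auxiliary check forces $\delta_i \ge -1$ when the edge has weight $-2$, by examining the copy $a_{i,\ell_{a_{i+1}}+1}$: this copy is matched to its last-choice dummy precisely when $\ell_{a_{i+1}}+1<\ell_{a_i}$, and then forms a blocking pair with $\tilde{b}_i$, whose $G^*$-preference ranks higher subscripts first so it prefers this copy (subscript $\ell_{a_{i+1}}+1$) over its $S$-match $a_{i+1,\ell_{a_{i+1}}}$. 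Summing around an alternating cycle $C$, the telescoping identity $\sum_i \delta_i = 0$ combined with these bounds yields $N_{+2}(C) \le N_{-2}(C)$, and hence $\wt_{S'}(C) \le 0$.

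For an alternating path $p$ with an unmatched endpoint the argument is the same once a boundary condition is added: if $a_0 \in A$ is unmatched then stability of $(a_{0,n_0-1},\tilde{b}_0)$ forces $\ell_{a_1}=n_0-1$; symmetrically, if $b_0 \in B$ is unmatched then $\ell_{a_1}=0$ (otherwise $a_{1,0}$, matched to its last-choice dummy $d_1(a_1)$, would form a blocking pair with the unmatched $\tilde{b}_0$). These boundary values drive the telescoping sum along $p$ to be non-positive, giving $\wt_{S'}(p) \le 0$. The main obstacle is the $-1$ lower bound for weight-$-2$ edges: it is invisible from the direct stability check at the level-copy $(a_{i,\ell_{a_i}},\tilde{b}_i)$ and requires locating the correct auxiliary copy of $a_i$ in $G^*$.
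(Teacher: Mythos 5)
Your proposal is correct and is essentially the paper's own argument: the levels $\ell_a$ are exactly the paper's partition $A = A_0 \cup \cdots \cup A_{n_0-1}$ (with $B_i$ the levels of matched $B$-nodes), and your three bounds on $\delta_i$ are precisely the paper's structural properties -- no blocking edge from $A_i$ to $B_j$ with $j \le i$, every edge from $A_{i+1}$ to $B_i$ has weight $-2$, and no edge from $A_i$ to $B_j$ with $i \ge j+2$ -- after which the telescoping sum gives $\wt_{S'}(\rho) \le 0$ for the relevant cycles and paths, and Theorem~\ref{thm:char} finishes. Two small remarks. First, for maximality of $S'$ the paper argues directly (an augmenting path would have to descend through all $n_0$ levels from $A_{n_0-1}$ to $B_0$ while never using a weight-$(-2)$ edge at an unmatched endpoint, forcing more than $n_0$ nodes of $A$ onto the path), whereas you invoke the Rural Hospitals theorem plus the result of \cite{Kav14}; your route is sound given the claimed equivalence of the $|A|$-level Gale--Shapley algorithm with Gale--Shapley on $G^*$, but the direct argument is self-contained. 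Second, your claim that the single lifted edge $(a_{i,\ell_{a_i}},\tilde b_i)$ already yields $\delta_i \ge 0$ ``in either weight-$0$ case'' is slightly off: when $a_i$ prefers its partner but $b_i$ prefers $a_i$, that check is vacuous, and ruling out $\delta_i=-1$ requires the same auxiliary-copy argument you describe for weight-$(-2)$ edges, applied to $a_{i,\ell_{a_i}-1}$ (which is matched to its last-choice dummy and would block with $\tilde b_i$ unless $b_i$ prefers $a_{i+1}$ to $a_i$). The machinery is already in your sketch, so this is a presentational slip rather than a gap.
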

\begin{proof}
  Partition the set $A$ into $A_0 \cup \cdots \cup A_{n_0-1}$ where for $0 \le i \le n_0-2$:
  $A_i = \{a \in A: (a_i,\tilde{b}) \in S$ for some $\tilde{b} \in \tilde{B}\}$, i.e., $a_i$ is matched in $S$ to a
  neighbor in $\tilde{B}$. The left-out nodes in $A$, i.e., those in $A \setminus (A_0\cup\cdots\cup A_{n_0-2})$, 
  form the set $A_{n_0-1}$ (see Fig.~\ref{fig:levels}).

  Similarly, partition $B$ into $B_0 \cup \cdots \cup B_{n_0-1}$ where for $1 \le i \le n_0-1$: $B_i = \{b: (a_i,\tilde{b}) \in S$ for some 
  $a \in A_i\}$, i.e., $\tilde{b}$'s partner in $S$ is a subscript~$i$ node. Let $B_0 = B \setminus (B_1\cup\cdots\cup B_{n_0-1})$
  be the set of left-out nodes in $B$. 
  \begin{figure}[h]
  \centerline{\resizebox{0.38\textwidth}{!}{\input{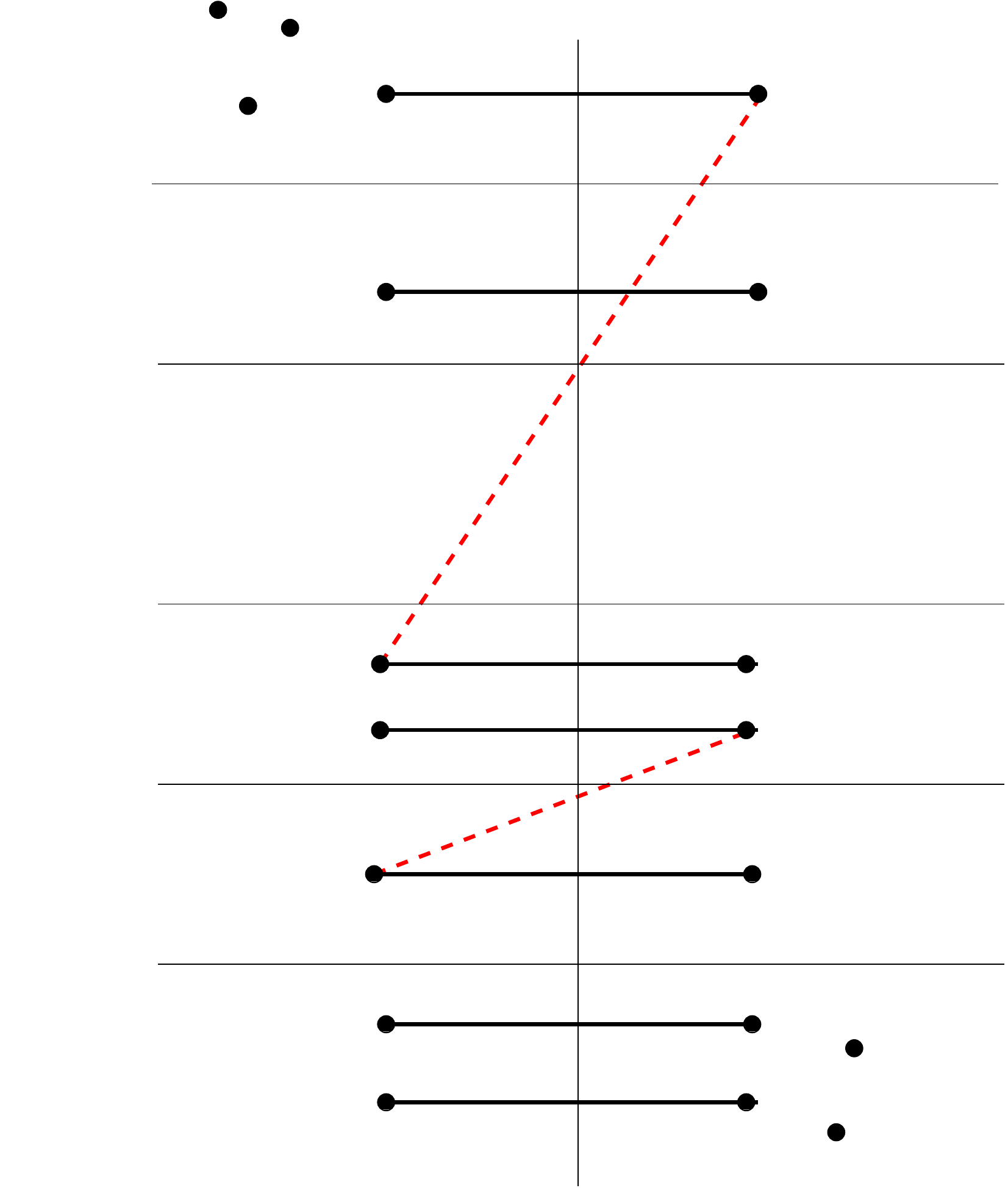_t}}}
  \caption{$A = A_0 \cup \cdots \cup A_{n_0-1}$ and $B = B_0 \cup \cdots \cup B_{n_0-1}$. 
    All nodes unmatched in $S'$ are in $A_{n_0-1} \cup B_0$. The dashed edges are blocking edges to $S'$.}
  \label{fig:levels}
  \end{figure}

  The following properties hold: (these are proved in the appendix)
  \begin{enumerate}
  \item $S' \subseteq \cup_{i=0}^{n_0-1} (A_i\times B_i)$. Moreover, $S'$ restricted to each set $A_i \cup B_i$ is stable.
  \item For any $i$ and edge $(a,b)$ where $a \in A_{i+1}, b \in B_i$: we have $\wt_{S'}(a,b) = -2$.
  \item $G$ has no edge in $A_i \times B_j$ where $i \ge j+2$.
  \item Any blocking edge to $S'$ has to be in $A_i \times B_j$ where $i \le j-1$.
  \item All nodes that are unmatched in $S'$ are in $A_{n_0-1} \cup B_0$.
  \item $S'$ is a maximum matching in $G$.
  \end{enumerate}
  
  Properties~1-4 imply that for any alternating cycle $C$ wrt $S'$, we have $\wt_{S'}(C) \le 0$. 
  Properties~1-5 imply that for any alternating path $p$ with one unmatched
  node as an endpoint, we have $\wt_{S'}(p) \le 0$. We refer to \cite[Theorem~2]{Kav14} for more details.
  Property~6 states that $S'$ is a maximum matching in $G$.
  Hence $S'$ is a popular max-matching in $G$ (by Theorem~\ref{thm:char}). \qed
\end{proof}    

Thus every stable matching in $G^*$ maps to a popular max-matching in $G$. We will show in Section~\ref{sec:dual-certificate}
that {\em every} popular max-matching in $G$ has to be realized in this manner. This is the tough part and
we will use dual certificates here. We will see that dual certificates capture a useful profile of popular max-matchings.

\section{Dual Certificates}
\label{sec:dual-certificate}

We first consider the simple case when $G$ admits a perfect matching. 
Let $M$ be a popular perfect matching in $G$. So no perfect matching is more popular than $M$. Consider the following linear program 
\eqref{LP1} that computes a max-weight (with respect to $\wt_M$) perfect matching in $G$.

\begin{linearprogram}
  {
    \label{LP1}
    \maximize \sum_{e \in E} \wt_M(e)\cdot x_e  
  }
  \qquad\sum_{e \in {\delta}(u)}x_e \ & = \ \ 1  \ \ \, \forall\, u \in A \cup B \ \ \ \ \ \text{and} \ \ \ \ \ x_e  \ \ge \ \ 0   \ \ \ \forall\, e \in E. \notag
\end{linearprogram}

It follows from the definition of the function $\wt_M$ that the optimal value of \eqref{LP1} is $\max_N \Delta(N,M)$ where $N$ is a
perfect matching in $G$. 
So if $M$ is a popular perfect matching then the optimal value of \eqref{LP1} is 0, which is $\Delta(M,M)$, i.e.,
the edge incidence vector of $M$ is an optimal solution to \eqref{LP1}.

The linear program \eqref{LP2} is the dual of \eqref{LP1}. Hence if $M$ is a popular perfect matching then there
exists a dual feasible $\vec{\alpha}$ such that $\sum_{u\in A \cup B} \alpha_u = 0$.

\begin{linearprogram}
  {
    \label{LP2}
    \minimize \sum_{u \in A \cup B}y_u
  }
  y_{a} + y_{b} \ & \ge \ \ \wt_M(a,b) \ \ \ \ \forall\, (a,b)\in E. \notag
\end{linearprogram}

Let $\vec{\alpha}$ be an optimal dual solution.
Observe that there exists an integral optimal solution to \eqref{LP2} since the constraint matrix is totally unimodular. Thus we can
assume that $\vec{\alpha} \in \mathbb{Z}^{2n_0}$, where $n_0 = |A| = |B|$. 

\begin{lemma}
\label{lemma:alpha}
If $M$ is a popular perfect matching in $G$ then there exists an optimal solution $\vec{\alpha}$ to \eqref{LP2}  such that
$\alpha_a \in \{0, -2, -4,\ldots, -2(n_0-1)\}$ for all $a \in A$ and
$\alpha_b \in \{0, 2, 4,\ldots, 2(n_0-1)\}$ for all $b \in B$.
\end{lemma}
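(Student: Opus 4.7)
The plan is to build the desired optimal dual by computing longest paths in an auxiliary digraph on $A$. First I would apply complementary slackness to any integral optimal dual $\vec{\alpha}$: each $e=(a,b) \in M$ has $x_e = 1$, so the corresponding constraint is tight and, since $\wt_M(e)=0$, we get $\alpha_a+\alpha_b=0$ for every matched pair. Because $M$ is perfect, this pairing automatically yields $\sum_u \alpha_u = 0$, so optimality of any candidate dual reduces to checking feasibility.

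Setting $y_a := -\alpha_a$ (so $\alpha_{M(a)} = y_a$), a non-matching edge $(a',b') \in E$ with $a := M^{-1}(b')$ rewrites the constraint $\alpha_{a'}+\alpha_{b'}\ge \wt_M(a',b')$ as $y_a - y_{a'} \ge \wt_M(a', M(a))$. I would encode these inequalities in a digraph $H$ on vertex set $A$ by placing, for every such non-matching edge, an arc $a' \to a$ of weight $\wt_M(a', M(a)) \in \{-2,0,2\}$. Then I would attach a dummy source $s$ with zero-weight arcs to every vertex of $A$ and define $y^*_a$ to be the maximum weight of a path from $s$ to $a$ in the resulting graph.

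For $y^*$ to be well-defined I must rule out positive cycles in $H$; this follows from feasibility of the original $\vec{\alpha}$, since summing $y_{a_{i+1}} - y_{a_i} \ge w(a_i \to a_{i+1})$ around a cycle telescopes to $0 \ge \sum w$. Hence the longest path may be taken simple, so it uses at most $n_0 - 1$ arcs of $H$, each of weight at most $2$, plus the $0$-weight arc from $s$, giving $0 \le y^*_a \le 2(n_0-1)$. All arc weights are even, so $y^*_a \in \{0,2,4,\ldots,2(n_0-1)\}$.

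Finally, I would verify that $\alpha_a := -y^*_a$ for $a \in A$ and $\alpha_b := y^*_{M^{-1}(b)}$ for $b \in B$ is an optimal dual solution: feasibility on every edge is exactly the defining longest-path inequality, while $\sum_u \alpha_u = 0$ follows from the matched pairing. The main obstacle is the parity-and-bound claim, which hinges on translating dual feasibility into a no-positive-cycle condition on $H$ so that the longest $s$-to-$a$ path is automatically simple and bounded by $2(n_0-1)$.
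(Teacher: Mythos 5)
Your proof is correct, but it takes a different route from the paper's. The paper starts from an arbitrary integral optimal dual and \emph{normalizes} it: complementary slackness forces $\alpha_a+\alpha_{M(a)}=0$, the parity of $\wt_M$ lets one collapse any gap larger than $2$ between consecutive distinct $\alpha$-values on $A$, and a final translation by $k$ (harmless because $|A|=|B|$) places the values in $\{0,-2,\ldots,-2(n_0-1)\}$. You instead use the given optimal dual only to certify that the induced difference-constraint system $y_a - y_{a'} \ge \wt_M(a',M(a))$ has no positive cycle, and then \emph{construct} a canonical solution from scratch as longest-path distances from a super-source; the range, sign, and parity all fall out of the fact that a simple path uses at most $n_0-1$ arcs of even weight at most $2$. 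Your construction is arguably more self-contained -- the paper's gap-collapsing step (``we can assume that \ldots $\alpha_{a'}-\alpha_{a''}=2$'') is asserted without detail, whereas your longest-path values satisfy every feasibility inequality by definition -- and it is also algorithmic (a Bellman--Ford computation yields the certificate). What the paper's normalization buys is that it manifestly stays within the set of optimal duals throughout and extends verbatim to the non-perfect case in Theorem~\ref{thm:main}, where the $\alpha$-values are further adjusted by local update rules rather than recomputed globally. One small point worth making explicit in your write-up: the inequality $y^*_a \ge y^*_{a'} + w(a'\to a)$ for longest-\emph{path} (as opposed to longest-walk) distances is exactly where the no-positive-cycle condition is used, since only then do the two notions coincide; you do say this, but it is the load-bearing step.
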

\begin{proof}
The dual feasibility constraints are $\alpha_a + \alpha_b \ge \wt_M(a,b)$ for all $(a,b) \in E$.
For each edge $(a,b) \in M$: $\alpha_{a} + \alpha_{b} = \wt_M(a,b) = 0$ by complementary slackness. 
Since $\alpha_b = -\alpha_a$ for $(a,b) \in M$ and because $\wt_M(e) \in \{0,\pm 2\}$ for each edge $e$,
we can assume that in the sorted order of distinct $\alpha$-values taken by nodes in $A$,
for any two consecutive values $\alpha_{a'},\alpha_{a''}$, where $\alpha_{a'} > \alpha_{a''}$, we have $\alpha_{a'} - \alpha_{a''} = 2$. 
Thus we can assume that $\alpha_a \in \{k,k-2, k-4, \ldots, k-2(n_0-1)\}$ 
for all $a \in A$ and $\alpha_b \in \{-k,-k+2,-k+4,\ldots, -k+2(n_0-1)\}$ for all $b \in B$, for some $k \in \mathbb{Z}$. 

Observe that $k$ has no impact on the objective function since $|A| = |B|$: so $k$'s and $-k$'s cancel each other out.
Let us update $\vec{\alpha}$ as follows: $\alpha_a = \alpha_a - k$ for every $a \in A$ and
 $\alpha_b = \alpha_b + k$ for every $b \in B$.
The updated vector $\vec{\alpha}$ continues to be dual feasible since $\alpha_a + \alpha_b$, for any edge $(a,b)$, is unchanged
by this update. Thus there is an optimal solution $\vec{\alpha}$ to \eqref{LP2} such that
$\alpha_a \in \{0,-2, -4, \ldots, -2(n_0-1)\}$ for all $a \in A$ and
$\alpha_b \in \{0,2, 4, \ldots, 2(n_0-1)\}$ for all $b \in B$. \qed
\end{proof}

Let $M$ be a popular perfect matching in $G$.
In order to define a stable matching $S$ in $G^*$ such that $M = S'$ (the set $S'$ is defined above Theorem~\ref{thm:stable}), 
we will use the vector $\vec{\alpha}$ described in Lemma~\ref{lemma:alpha}. Since $M$ is perfect, we know that for any $a \in A$, 
there is an edge $(a,b) \in M$ for some neighbor
$b$ of $a$. Recall that $\alpha_a + \alpha_b = \wt_M(a,b) = 0$ by complementary slackness.
We will include the edge $(a_i,\tilde{b})$ in $S$ where $\alpha_a = -2i$ and $\alpha_b = 2i$. Thus we define $S$ as follows:
\[S = \cup_{i=0}^{n_0-1}\{(a_i,\tilde{b}): (a,b) \in M \ \text{and}\ \alpha_a = -2i, \alpha_b = 2i\} \cup \{\text{necessary edges incident to 
dummy nodes in}\ G^*\}.\]

In more detail, the edges incident to dummy nodes that are present in $S$ are as follows: if 
$\alpha_a = -2i$ then these edges are $(a_j,d_{j+1}(a))$ for $0 \le j \le i-1$ and $(a_j,d_{j}(a))$ for $i+1 \le j \le n_0-1$.

Since $(a_i,\tilde{b}) \in S$, all the $n_0$ nodes $a_0,\ldots,a_{n_0-1}$ and the
dummy nodes $d_1(a),\ldots,d_{n_0-1}(a)$ corresponding to $a$ in $G^*$ are matched in $S$. This holds for every $a \in A$. Also every
$\tilde{b} \in \tilde{B}$ is matched in $S$ since $M$ is a perfect matching in $G$. Thus $S$ is a perfect matching in $G^*$. 
It is easy to check that $S' = M$. What we need to prove is the stability of $S$ in $G^*$.

\begin{theorem}
  \label{thm:conv1}
  The matching $S$ is stable in $G^*$.
\end{theorem}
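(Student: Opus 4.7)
\textbf{Proof proposal for Theorem~\ref{thm:conv1}.}

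The plan is to verify stability edge by edge. Observe first that $S$ is a perfect matching in $G^*$: every $\tilde{b} \in \tilde{B}$ is matched because $M$ is perfect on $G$, and for each $a\in A$ with $\alpha_a = -2k_a$, the construction matches $a_{k_a}$ with $\tilde{b}$ where $(a,b)\in M$, matches $a_0,\ldots,a_{k_a-1}$ to their ``right'' dummy neighbors $d_1(a),\ldots,d_{k_a}(a)$, and matches $a_{k_a+1},\ldots,a_{n_0-1}$ to their ``left'' dummy neighbors $d_{k_a+1}(a),\ldots,d_{n_0-1}(a)$. In particular, $a_j$ sits at its least preferred neighbor $d_{j+1}(a)$ when $j<k_a$, at the real $\tilde{b}$ when $j=k_a$, and at its most preferred neighbor $d_j(a)$ when $j>k_a$.

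Next I would dispense with blocking via dummy edges. The only candidate is $(a_{i-1},d_i(a))$ when $d_i(a)$ is matched to $a_i$, which happens exactly when $i\ge k_a+1$. If $i-1>k_a$, then $a_{i-1}$ is at its top choice $d_{i-1}(a)$, so it does not prefer $d_i(a)$. If $i-1=k_a$, then $a_{k_a}$ is matched to $\tilde{b}=\tilde{M(a)}$, and its preference list places every $\tilde{b}$ above $d_{k_a+1}(a)=d_i(a)$, so again no block.

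The main case is a real edge $(a_i,\tilde{b})\in E^*$ with $(a,b)\in E$; write $\alpha_a=-2k_a$ and $\alpha_b=2k_b$. For $i>k_a$, $a_i$ is at its top choice $d_i(a)$, so nothing blocks. For $i<k_a$, $a_i$ is at its worst choice, so it certainly prefers $\tilde{b}$; but $\tilde{b}$ is matched in $S$ to a subscript-$k_b$ partner, and since $\tilde{b}$'s preferences prefer higher subscripts over lower, it suffices to show $i\le k_b$, with strict inequality unless $b$ prefers its $M$-partner to $a$. This is exactly what dual feasibility $\alpha_a+\alpha_b\ge \wt_M(a,b)$ gives: if $\wt_M(a,b)=-2$ then $k_b\ge k_a-1\ge i$, and equality $k_b=i$ can occur only when $b$ prefers $M(b)$ to $a$; if $\wt_M(a,b)\in\{0,2\}$ then $k_b\ge k_a>i$. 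The subtlest case is $i=k_a$: now $a_{k_a}$ is matched to $\tilde{M(a)}$ and prefers $\tilde{b}$ iff $a$ prefers $b$ to $M(a)$; so I only need to rule out a block when $a$ prefers $b$ to $M(a)$. If additionally $b$ prefers $a$ to $M(b)$, then $(a,b)$ blocks $M$ and $\wt_M(a,b)=2$, forcing $k_b\ge k_a+1>k_a=i$, and $\tilde{b}$ prefers its higher-subscript match. Otherwise $b$ prefers $M(b)$ to $a$, so $\wt_M(a,b)=0$, giving $k_b\ge k_a=i$; since the within-subscript order on $\tilde{b}$'s list agrees with $b$'s order on $G$, $\tilde{b}$ still prefers its current partner to $a_i$.

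Putting these cases together shows no edge of $E^*$ blocks $S$, so $S$ is stable. I expect the main obstacle to be organizing the $i=k_a$ case cleanly, since there the blocking/non-blocking distinction rides on both $a$'s and $b$'s individual preferences, and the correct book-keeping requires reading the dual constraint $\alpha_a+\alpha_b\ge \wt_M(a,b)$ together with the ``subscript-first, then $b$'s preference'' structure of $\tilde{b}$'s list in $G^*$; everything else is a direct inspection of the layered matching $S$.
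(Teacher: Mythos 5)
Your proposal is correct and takes essentially the same route as the paper: a case analysis on the subscript $i$ relative to the matched level $k_a$, using dual feasibility $\alpha_a+\alpha_b\ge \wt_M(a,b)$ together with the subscript-first structure of $\tilde{b}$'s preference list to rule out each candidate blocking edge. The only cosmetic difference is that you split on the exact value of $\wt_M(a,b)$ where the paper uses the single bound $\wt_M\ge -2$ to get $\alpha_w\ge 2i-2$; the substance is identical.
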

\begin{proof}
  We need to show there is no edge in $G^*$ that blocks $S$. There is no blocking edge incident to a dummy node:
  this is because a dummy node $d_i(a)$ has only two neighbors and when $d_i(a)$ is matched in $S$ to its second choice neighbor $a_i$,
  its top choice neighbor $a_{i-1}$ prefers its partner in $S$ to $d_i(a)$.
  
  Let us now show that no node in
  $a_0,\ldots,a_{n_0-1}$ has a blocking edge incident to it, for any $a \in A$.
  Let $(a_i,\tilde{b}) \in S$ where $(a,b) \in M$. All of $a_{i+1},\ldots,a_{n_0-1}$ are matched to their respective top choice neighbors
  $d_{i+1}(a),\ldots,d_{n_0-1}(a)$. So there is no blocking edge incident to  any of $a_{i+1},\ldots,a_{n_0-1}$.
  
  All of $a_0,\ldots,a_{i-1}$ are matched to their last choice neighbors---these are the dummy nodes $d_1(a),\ldots,$ $d_i(a)$, 
  respectively.
  Consider any neighbor $w \in B$ of $a$. We need to show that $\tilde{w} \in \tilde{B}$ is matched in $S$ to a neighbor preferred to 
  all of $a_0,\ldots,a_{i-1}$.  We have $\alpha_a + \alpha_w \ge \wt_M(a,w)$. Since $\alpha_a = -2i$ and $\wt_M(e) \ge -2$ for every edge 
  $e$, it follows that $\alpha_w \ge 2i - 2$.

  So $(z,w) \in M$ for some neighbor $z$ of $w$ such that $\alpha_z = -\alpha_w \le -(2i-2)$. Equivalently, $(z_j,\tilde{w}) \in S$ where 
  $j \ge i-1$. Thus there is no blocking edge between $\tilde{w}$ and any of $a_0,\ldots,a_{i-2}$ by $\tilde{w}$'s preference order in 
  $G^*$. We will now show that $(a_{i-1},\tilde{w})$ cannot be a blocking edge.
  \begin{itemize}
  \item  If $j \ge i$ then by $\tilde{w}$'s preference order in $G^*$, $\tilde{w}$ prefers $z_j$ to $a_{i-1}$ and so $(a_{i-1},\tilde{w})$ 
         does not block $S$.
  \item  If $j = i-1$ then $\wt_M(a,w) \le \alpha_a + \alpha_w = -2i + 2i - 2 = -2$. So both $a$ and $w$ prefer their respective
  partners in $M$ to each other. Thus $\tilde{w}$ prefers $z_{i-1}$ to $a_{i-1}$. So $(a_{i-1},\tilde{w})$ does not block $S$.
  \end{itemize}
  
  Finally, we need to show that there is no blocking edge incident to $a_i$. By the above argument, we only need to consider edges 
  $(a_i,\tilde{w})$ where $(z_i,\tilde{w}) \in S$. So $\wt_M(a,w) \le \alpha_a + \alpha_w = -2i + 2i = 0$. 
  Hence either $(a,w)\in M$ or one of $a,w$ prefers its partner in $M$ to the other. So
  either $(a_i,\tilde{w}) \in S$ or one of $a_i,\tilde{w}$ prefers its partner in $S$ to the other;
  thus the edge $(a_i,\tilde{w})$ does not block $S$. Hence no edge in $G^*$ blocks $S$. \qed
\end{proof}

\subsection{The general case}
We showed that when $G$ has a perfect matching, our mapping from the set of stable matchings in $G^*$ to the set of
popular max-matchings in $G$ is surjective. Now we look at the general case, i.e., $G$ need not have a perfect matching.

Let $M$ be a popular max-matching in $G$. Let $U \subseteq A \cup B$ be the set of nodes left unmatched in $M$.  
Consider \eqref{LP3} that computes a max-weight perfect matching  (with respect to $\wt_M$) 
in the subgraph $G'$ induced on $(A\cup B)\setminus U$. 

Let $E'$ be the edge set of $G'$. For any $v \in (A \cup B)\setminus U$, let $\delta'(v) = \delta(v) \cap E'$.
\begin{linearprogram}
  {
    \label{LP3}
    \maximize \sum_{e \in E'} \wt_M(e)\cdot x_e  
  }
  \qquad\sum_{e \in \delta'(v)}x_e \ & = \ \ 1  \ \ \, \forall\, v \in (A \cup B)\setminus U \ \ \ \ \ \text{and}\ \ \ \ \ x_e  \ \ge \ \ 0   \ \ \ \forall\, e \in E'. \notag
\end{linearprogram}

The optimal value of \eqref{LP3} is $\max_N\Delta(N,M)$ where $N$ is a perfect matching in $G'$. Any perfect matching in $G'$ is
a maximum matching in $G$ and since $M$ is a popular max-matching in $G$,
$\Delta(N,M) \le 0$ for any perfect matching $N$ in $G'$. 
Since $\Delta(M,M) = 0$, the edge incidence vector of $M$ is an optimal solution to \eqref{LP3}.
The linear program \eqref{LP4} is the dual of \eqref{LP3}.

\begin{linearprogram}
  {
    \label{LP4}
    \minimize \sum_{u \in (A \cup B)\setminus U}y_u
  }
  y_{a} + y_{b} \ & \ge \ \ \wt_M(a,b) \ \ \ \ \forall\, (a,b)\in E'. \notag
\end{linearprogram}

Let $n' = |(A\cup B)\setminus U|$. There exists an optimal solution $\vec{\alpha} \in \mathbb{Z}^{n'}$
to \eqref{LP4} such that $\sum_{u \in (A \cup B)\setminus U}\alpha_u = 0$.
Moreover, we can assume the following (see Lemma~\ref{lemma:alpha}) where $A' = A\setminus U$, $B' = B\setminus U$,
and $|A'| = |B'| = n_0$:

\begin{itemize}
\item[(1)]$\alpha_a \in \{0, -2, -4,\ldots,-2(n_0-1)\}$ for all $a \in A'$
\item[(2)] $\alpha_b \in \{0, 2, 4,\ldots, 2(n_0-1)\}$ for all $b \in B'$.
\end{itemize}

For any $T \subseteq A \cup B$, let $\Nbr(T)$ be the set of neighbors in $G$ of nodes in $T$.
Theorem~\ref{thm:main} is our main technical result here. 
Let $U_A = U \cap A$ and $U_B = U \cap B$.

\begin{theorem}
  \label{thm:main}
Let $M$ be a popular max-matching in $G$ and let $U$ be the set of nodes left unmatched in $M$.
  There exists an optimal solution $\vec{\alpha}$ to \eqref{LP4}, where $\alpha_a \in \{0, -2,\ldots,$ $-2(n_0-1)\}$
  for $a \in A', \alpha_b \in \{0, 2,\ldots, 2(n_0-1)\}$ for $b \in B'$ such that
  (i)~$\alpha_a = 0$ for $a \in \Nbr(U_B)$ and (ii)~$\alpha_b = 2(n_0-1)$ for $b \in \Nbr(U_A)$.
\end{theorem}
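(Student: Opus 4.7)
The plan is to construct $\vec{\alpha}$ from an integer level function on the set $M$ of matched pairs. I translate \eqref{LP4} into the language of levels: for each $(a,b) \in M$ I will define $\ell(a,b) \in \{0,1,\ldots,n_0-1\}$ and then set $\alpha_a = -2\ell(a,M(a))$ for $a \in A'$ and $\alpha_b = 2\ell(M(b),b)$ for $b \in B'$. For dual feasibility, form the directed multigraph $K'$ on vertex set $M$ with, for each non-matching edge $(a_p, b_q) \in E'$, an arc $p \to q$ of weight $w(p,q) := \wt_M(a_p, b_q)/2 \in \{-1,0,1\}$; here $a_p,b_p$ denote the two sides of a pair $p$. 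The dual inequalities $\alpha_a + \alpha_{b'} \ge \wt_M(a,b')$ then read as $\ell(q) - \ell(p) \ge w(p,q)$ for every arc, and the boundary requirements of the theorem become $\ell(p) = 0$ on $P_0 := \{(a,b) \in M : a \in \Nbr(U_B)\}$ and $\ell(p) = n_0 - 1$ on $P_{n_0-1} := \{(a,b) \in M : b \in \Nbr(U_A)\}$. Every $a \in \Nbr(U_B)$ is matched (otherwise its edge to the unmatched witness would be $M$-augmenting), so $P_0$ is well defined; symmetrically $\Nbr(U_A) \subseteq B'$. Moreover, $P_0 \cap P_{n_0-1} = \emptyset$: if $(a,b)$ lay in both, an unmatched $b_0 \in U_B$ adjacent to $a$ and an unmatched $a_0 \in U_A$ adjacent to $b$ give the alternating path $(b_0,a,b,a_0)$, which augments $M$ and contradicts maximality.

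The technical heart of the proof is a family of weight bounds on paths in $K'$, all derived from Theorem~\ref{thm:char}. A $K'$-path $p_0 \to \cdots \to p_k$ corresponds to the alternating path $(b_{p_0},a_{p_0},b_{p_1},a_{p_1},\ldots,b_{p_k},a_{p_k})$ in $G$, whose matching edges contribute $\wt_M = 0$ and whose non-matching edges contribute exactly twice the $K'$-path weight. The pivotal observation is that if $b_0 \in U_B$ and $a$ is any matched neighbor of $b_0$, then $\wt_M(a,b_0) = 0$: the value cannot be $2$, for then the one-edge alternating path $(b_0,a)$ from the unmatched $b_0$ would violate Theorem~\ref{thm:char}(2), and it cannot be $-2$ since $b_0$ has no partner to prefer; the symmetric statement holds for $a_0 \in U_A$. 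Using this zero-weight ``extension lemma'', I attach an unmatched witness at whichever end of the alternating $G$-path lies in $P_0$ or $P_{n_0-1}$. Three bounds follow: every $K'$-path ending in $P_0$ has weight $\le 0$ (append the unmatched $B$-neighbor of $a_{p_k}$), every $K'$-path starting in $P_{n_0-1}$ has weight $\le 0$ (prepend the unmatched $A$-neighbor of $b_{p_0}$), and no $K'$-path goes from $P_{n_0-1}$ to $P_0$ at all, because combining both extensions yields an $M$-augmenting path between an unmatched $A$-node and an unmatched $B$-node.

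With these bounds in place I define $\ell$ as a longest-path function. Augment $K'$ to $K''$ by adjoining a virtual source $s$ with arcs $s \to p$ of weight $n_0 - 1$ for $p \in P_{n_0-1}$ and of weight $0$ for every other $p \in M$, and set $\ell(p)$ to be the maximum weight of an $s$-to-$p$ walk. Theorem~\ref{thm:char}(1) guarantees $K'$ has no positive cycle, so this maximum is attained by a simple path and is a well-defined integer. The path bounds then force $\ell(p) = 0$ on $P_0$ (every indirect route reaches $p$ via a $K'$-path that ends in $P_0$, hence of weight $\le 0$, and no route through $P_{n_0-1}$ reaches $P_0$), $\ell(p) = n_0-1$ on $P_{n_0-1}$ (the direct arc realizes it, and any detour through $P_{n_0-1}$ contributes at most $0$ on top of the initial $n_0-1$, while any $K'$-path from any other pair has at most $n_0 - 1$ arcs each of weight $\le 1$), and $\ell(p) \in \{0,\ldots,n_0-1\}$ for every remaining $p$ by the same two bounds. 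The defining maximum gives $\ell(q) \ge \ell(p) + w(p,q)$ for every arc of $K'$ automatically.

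Finally, setting $\alpha_a = -2\ell(a,M(a))$ and $\alpha_b = 2\ell(M(b),b)$ yields the required dual solution: complementary slackness $\alpha_a + \alpha_b = 0$ on $M$ is built in, feasibility on each non-matching edge in $E'$ is the level inequality arc by arc, and properties~(i) and~(ii) are exactly $\ell|_{P_0} = 0$ and $\ell|_{P_{n_0-1}} = n_0 - 1$. Summing gives $\sum_u \alpha_u = \sum_{(a,b) \in M}(\alpha_a + \alpha_b) = 0$, matching the primal optimum and certifying optimality. The main obstacle throughout is the bookkeeping required to turn each $K'$-path into a simple alternating $G$-path of the right alternation parity with an unmatched endpoint of the right side; the zero-weight extension lemma is what makes this translation clean and lets the boundary conditions propagate into a consistent level assignment.
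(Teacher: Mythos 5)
Your proof is correct, and it takes a genuinely different route from the paper's. The paper starts from an arbitrary optimal dual in the normalized form of Lemma~\ref{lemma:alpha} and repairs it in two phases: it first forces $\alpha_b = 2(n_0-1)$ on $\Nbr(U_A)$ and restores feasibility by repeatedly applying local ``pull-up'' rules (rules~1--3), then forces $\alpha_a = 0$ on $\Nbr(U_B)$ with symmetric ``pull-down'' rules (rules~4--6); it must then separately argue termination (Claim~\ref{claim2}, via a forbidden positive alternating cycle/path) and that the second phase does not destroy property~(ii) (via an augmenting path). You instead build the dual from scratch as a longest-path potential on the difference-constraint digraph of matched pairs, with the boundary values injected through source arcs. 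The three ingredients you need --- no positive cycle in $K'$, every path ending in $P_0$ or starting in $P_{n_0-1}$ has weight $\le 0$, and no path from $P_{n_0-1}$ to $P_0$ --- are exactly Theorem~\ref{thm:char} plus the maximality of $M$, applied uniformly; in particular the interaction between the two boundary conditions, which the paper handles last and somewhat delicately, is absorbed into the single ``no $P_{n_0-1}$-to-$P_0$ path'' statement. Your construction also subsumes Lemma~\ref{lemma:alpha}: integrality, the parity and range $\{0,\ldots,n_0-1\}$ of the levels, and optimality ($\sum_u\alpha_u=0$ against a primal optimum of $0$) all fall out of the definition rather than being inherited from a pre-normalized solution. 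The only points deserving an extra line in a final write-up are the routine walk-to-simple-path reductions (any walk reduces to a simple path without decreasing weight, since all cycles are nonpositive) and the observation that the appended unmatched witness cannot already lie on the path, so the extended alternating paths fed to Theorem~\ref{thm:char} are genuinely simple; you flag both, and neither is a gap.
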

\begin{proof}
  We know there exists an optimal solution $\vec{\alpha}$ to \eqref{LP4} such that
  $\alpha_a \in \{0, -2,\ldots,-2(n_0-1)\}$ for $a \in A'$ and $\alpha_b \in \{0, 2,\ldots, 2(n_0-1)\}$ for $b \in B'$.
  Now we will update $\vec{\alpha}$ so that it remains
  an optimal solution to \eqref{LP4} in the above format and it also satisfies properties~(i) and (ii) given in the theorem
  statement.

 We will first update $\vec{\alpha}$ so that $\alpha_b = 2(n_0-1)$ for all $b \in \Nbr(U_A)$, i.e., property~(ii) is obeyed.
  We will then update $\vec{\alpha}$ so that it satisfies $\alpha_a = 0$ for all $a \in \Nbr(U_B)$, i.e.,
  property~(i) is obeyed. We then use the fact that $M$ is a maximum matching to show that
  the second update did not undo the ``good'' caused by the first update, i.e.,
  $\vec{\alpha}$ satisfies property~(i) and property~(ii).

  \smallskip
  
  \noindent{\em Property~(ii).}
  Suppose the original vector $\vec{\alpha} \in \{0,\ldots,\pm 2(n_0-1)\}^{n'}$ does not satisfy property~(ii).
  Then we will update $\vec{\alpha}$ so that property~(ii) is satisfied. First, we increase the $\alpha$-values
  of nodes in $\Nbr(U_A)$ to $2(n_0-1)$ and decrease the $\alpha$-values of their partners in $M$ to $-2(n_0-1)$.
  Now $\vec{\alpha}$ may no longer be a feasible solution to \eqref{LP4}. We will use the following three update
  rules for all $a \in A'$ to make $\vec{\alpha}$ feasible again. Let $\alpha_a = -2i$ where $i \in \{0,\ldots,n_0-1\}$.
  Suppose there is some $(a,b) \in E'$ such that $\alpha_a + \alpha_b < \wt_M(a,b)$. Let $M(b)$ be $b$'s partner in $M$.

  \begin{itemize}
  \item {\em Rule~1.} If $\wt_M(a,b) = 0$ then update $\alpha_b = 2i$ and $\alpha_{M(b)} = -2i$. 
  \item {\em Rule~2.} If $\wt_M(a,b) = -2$ then update $\alpha_b = 2(i-1)$ and $\alpha_{M(b)} = -2(i-1)$. 
  \item {\em Rule~3.} If $\wt_M(a,b) = 2$ then update $\alpha_b = 2(i+1)$ and $\alpha_{M(b)} = -2(i+1)$. 
  \end{itemize}
      
  At the onset, $\vec{\alpha}$ was a feasible solution to \eqref{LP4}: so $\alpha_a + \alpha_b \ge \wt_M(a,b)$ for $(a,b) \in E'$.
  Then we moved nodes in $\Nbr(U_A)$ and their partners in $M$ to sets $B_{n_0-1}$ and $A_{n_0-1}$, resp., where
  $A_i = \{a\in A': \alpha_a = -2i\}$ and $B_i = \{b\in B': \alpha_b = 2i\}$ for all $i$.
  The subscript $i$ will be called the {\em level} of nodes in $A_i \cup B_i$.
  
  The nodes that moved to $A_{n_0-1}$ have a lower $\alpha$-value than earlier and it is these
  nodes that ``pull'' their neighbors upwards to higher levels as given by rules~1-3.
  Let $a$ be a new node in level~$i$ and let $b$ be a neighbor of $a$ such that $\alpha_a + \alpha_b < \wt_M(a,b)$.
  Then $b$ and $M(b)$ move to:
  (1)~level~$i$ if $\wt_M(a,b) = 0$,
  (2)~level~$i-1$ if $\wt_M(a,b) = -2$, else
  (3)~level~$i+1$, i.e., if $\wt_M(a,b) = 2$.
  
  In turn, the nodes in $A'$ that have moved to these higher levels by rules~1-3 pull their neighbors and the partners of these
  neighbors upwards to higher levels by these rules. Thus we may get further new nodes in $B_{n_0-1},A_{n_0-1}$ and so on.
  While any of rules~1-3 is applicable, we apply that rule.
  So a rule may be applied many times to the same edge in $E'$.
  
   Claim~\ref{claim2} (proved in the appendix) shows a useful property. We show in its proof that such a blocking edge
   creates a {\em forbidden} alternating cycle/path wrt $M$ (as given in Theorem~\ref{thm:char}).

  \begin{new-claim}
    \label{claim2}
    By applying the above rules, suppose a node $v_0 \in A'$ moves to $A_{n_0-1}$. Then there is no blocking edge
    ($e$ such that $\wt_M(e) = 2$) incident to $v_0$.
  \end{new-claim}  
 
  Applying rules~1-3 increases the $\alpha$-values of some nodes in $B'$ and it never decreases the $\alpha$-value of any node
  in $B'$. The nodes in $B'$ with increased $\alpha$-values and their partners have moved to higher levels (see Fig.~\ref{fig:levels}).
  This upwards movement of nodes has to terminate at level~$n_0-1$. For the $\alpha$-value of any $b \in B'$ to be increased beyond
  $2(n_0-1)$, we need a blocking edge $(a,b)$ where $a \in A_{n_0-1}$: this would cause rule~3 to be applied which would increase
  $\alpha_b$ to $2n_0$. However there is no such blocking edge (by Claim~\ref{claim2}).
  
  Since there are $n_0$ levels and because $|B'| = n_0$, there can be at most $n_0^2$ applications of these rules.
  When no rule is applicable, $\vec{\alpha}$ is a feasible solution to \eqref{LP4}. Moreover, $\sum_u \alpha_u$
  is invariant
  under this update of $\alpha$-values, since we maintain $\alpha_a + \alpha_b = 0$ for every $(a,b) \in M$.
  Hence $\vec{\alpha}$ is an optimal solution to \eqref{LP4}. Thus for every popular max-matching $M$, there is an
  optimal solution $\vec{\alpha}$ to \eqref{LP4} in the desired format that satisfies property~(ii).
  
  \smallskip
  
 \noindent{\em Property~(i).}
  We now have an optimal solution $\vec{\alpha} \in \{0, \pm 2,\ldots,$ $\pm 2(n_0-1)\}^{n'}$ to \eqref{LP4}, where
  $\alpha_a \le 0$ for all $a \in A'$ and $\alpha_b \ge 0$ for all $b \in B'$, such that $\alpha_b = 2(n_0-1)$ for all $b \in \Nbr(U_A)$.
  Suppose property~(i) is not satisfied.
  
  Then we increase $\alpha$-values of certain nodes in $A'$ -- this moves several nodes {\em downwards}  with respect to
  their level (see Fig.~\ref{fig:levels}) and ensures that property~(i) holds.
  First, we increase the $\alpha$-values of nodes in $\Nbr(U_B)$ to 0 and their partners also have $\alpha$-values updated to $0$.
  Then we will use the
  following three update rules for all $b \in B'$. Let $\alpha_b = 2i$ where $i \in \{0,\ldots,n_0-1\}$.
  Suppose there is an edge $(a,b) \in E'$ such that $\alpha_a + \alpha_b < \wt_M(a,b)$.  Let $M(a)$ be $a$'s partner in $M$.
  
  \begin{itemize}
  \item {\em Rule~4.} If $\wt_M(a,b) = 0$ then update $\alpha_a = -2i$ and $\alpha_{M(a)} = 2i$. 
  \item {\em Rule~5.} If $\wt_M(a,b) = -2$ then update $\alpha_a = -2(i+1)$ and $\alpha_{M(a)} = 2(i+1)$. 
  \item {\em Rule~6.} If $\wt_M(a,b) = 2$  then update $\alpha_a = -2(i-1)$ and $\alpha_{M(a)} = 2(i-1)$. 
  \end{itemize}

 Applying rules~4-6 increases the $\alpha$-values of some nodes in $A'$ and it never decreases the $\alpha$-value of any node 
 in $A'$. The nodes in $A'$ with increased $\alpha$-values and their partners have moved to lower levels.
 Moreover, the movement of nodes downwards has to stop at level~$0$ since no blocking edge can be incident to any node that moves
 to $B_0$ (analogous to Claim~\ref{claim2}).
 While any of the above three rules is applicable, we apply that rule.
 When no rule is applicable, $\vec{\alpha}$ is a feasible solution to \eqref{LP4}.
 Since $\sum_u \alpha_u = 0$, $\vec{\alpha}$ is an optimal solution to \eqref{LP4}. So there is 
 an optimal solution $\vec{\alpha}$ to \eqref{LP4} in the desired format that satisfies property~(i).

 \smallskip

 \noindent{\em Properties~(i) and (ii).}
 Note that we cannot claim straightaway that the above $\vec{\alpha}$ satisfies both property~(i) and property~(ii).
 This is because rules~4-6 pull nodes downwards and this may have caused $\alpha_b < 2(n_0-1)$ for some $b \in \Nbr(U_A)$.
 We will now show this is not possible.
 Suppose there is such a node $w_1 \in \Nbr(U_A)$.

 So there is some $z\in U_A$ such that $(z,w_1) \in E$ and
 though $\alpha_{w_1} = 2(n_0-1)$ just before we started applying rules~4-6, the application of these rules caused $w_1$ to move 
 downwards, i.e., $\alpha_{w_1} < 2(n_0-1)$ at the end. 
 Initially we added nodes in $\Nbr(U_B)$ and their partners in $M$ to $A_0$ and $B_0$, respectively.
 Then we repeatedly applied rules~4-6 and this resulted in
 $w_1$ moving downwards from level~$n_0-1$. Corresponding to what caused $w_1$ to be ``pulled'' downwards, we will construct
 an alternating path $p = w_1 - M(w_1) - w_2 - M(w_2) - \cdots - w_r - M(w_r) - u$  between $w_1$ and some node $u \in U_B$.

 The path $p$ will consist of $r$ pairs of edges
 for some $r \ge 1$. For $1 \le i \le r-1$: the $i$-th pair of edges is $(w_i,M(w_i))$ and $(M(w_i),w_{i+1})$ where $w_{i+1}$ is the
 node that pulled $M(w_i)$ and $w_i$ downwards to their level due to the application of one of rules~4-6. The last pair of edges in $p$ is
 $(w_r,M(w_r))$ and $(M(w_r),u)$, where $w_r \in B_0, M(w_r) \in A_0$, and $u \in U_B$. Thus we have an alternating path $p$ wrt
$M$ between $w_1$ and $u \in U_B$.

By adding the edge $(z,w_1)$ as a prefix to the path $p$, we get an augmenting path $z - w_1 - \cdots - M(w_r) - u$ 
with respect to $M$. However there cannot be any augmenting path wrt $M$ since $M$ is a maximum matching in $G$.
Thus $\vec{\alpha}$ satisfies property~(ii). So we have an optimal solution $\vec{\alpha}$ to \eqref{LP4} in the desired format that satisfies
both property~(i) and property~(ii).  \qed
\end{proof}

\begin{theorem}
  \label{lem:surjective}
  Let $M$ be any popular max-matching in $G$. Then
  there is a stable matching $S$ in the graph $G^*$ such that $M = S'$.
\end{theorem}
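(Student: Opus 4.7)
\noindent\textbf{Proof plan for Theorem \ref{lem:surjective}.}
The plan is to mimic the construction used in Theorem \ref{thm:conv1} but driven by the dual vector $\vec{\alpha}$ guaranteed by Theorem \ref{thm:main}, and then extend the stability argument to handle nodes touching $U_A$ or $U_B$. Specifically, for each $a \in A'$ with $\alpha_a = -2i$ and $M$-partner $b$ (so $\alpha_b = 2i$ by complementary slackness), include the edge $(a_i,\tilde{b})$ in $S$, together with the dummy edges $(a_j,d_{j+1}(a))$ for $0 \le j \le i-1$ and $(a_j,d_j(a))$ for $i+1 \le j \le n_0-1$. For each $a \in U_A$ include the dummy edges $(a_j,d_{j+1}(a))$ for $0 \le j \le n_0-2$ and leave $a_{n_0-1}$ unmatched; for each $b \in U_B$ leave $\tilde{b}$ unmatched. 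By construction $S' = M$, so it only remains to verify stability of $S$ in $G^*$.

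For edges incident to dummy nodes, the argument is identical to Theorem \ref{thm:conv1}. For edges $(a_j,\tilde{w})$ with both $a \in A'$ and $w \in B'$, the dual feasibility $\alpha_a + \alpha_w \ge \wt_M(a,w)$ together with the complementary slackness condition $\alpha_a + \alpha_b = 0$ for $(a,b) \in M$ reproduce the three subcases of Theorem \ref{thm:conv1} verbatim. Thus the only genuinely new cases are edges with exactly one endpoint in $U_A \cup U_B$ (an edge with both endpoints in $U$ cannot exist because $M$ is a maximum matching).

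The first new case is an edge $(a_j,\tilde{b})$ with $a \in U_A$. Then $b \in \Nbr(U_A)$, so property~(ii) of Theorem \ref{thm:main} gives $\alpha_b = 2(n_0-1)$. If $(z,b) \in M$, then $\alpha_z = -2(n_0-1)$, so $(z_{n_0-1},\tilde{b}) \in S$. For $j \le n_0-2$ the node $\tilde{b}$ prefers its subscript-$(n_0-1)$ partner $z_{n_0-1}$ to $a_j$ by its layered preference order. For $j = n_0-1$ (the unmatched copy of $a$), we need $\tilde{b}$ to prefer $z_{n_0-1}$ to $a_{n_0-1}$, which at equal subscripts reduces to $b$ preferring $z$ to $a$ in $G$; if instead $b$ preferred $a$, then the edge $(a,b)$ would be a blocking edge with $\wt_M(a,b)=2$, making $a-b-z$ an alternating path with unmatched endpoint $a$ and positive $\wt_M$, contradicting Theorem \ref{thm:char}. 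The second new case is an edge $(a_j,\tilde{b})$ with $b \in U_B$. Then $a \in \Nbr(U_B)$, so property~(i) of Theorem \ref{thm:main} forces $\alpha_a = 0$; hence $a_0$ is matched to $\tilde{b'}$ where $(a,b') \in M$, and each $a_j$ with $j \ge 1$ is matched to its top choice dummy $d_j(a)$ and so prefers its $S$-partner to $\tilde{b}$. For $j = 0$ we need $a_0$ to prefer $\tilde{b'}$ to $\tilde{b}$, equivalently $a$ prefers $b'$ to $b$; else $\wt_M(a,b) = 2$ and the alternating path $b-a-b'$ (with unmatched endpoint $b$) again contradicts Theorem \ref{thm:char}.

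The hard part is not the construction but the verification in the two new cases: the dual vector alone does not constrain edges incident to $U$, so one must pull in the maximum-matching hypothesis (or equivalently condition~(2) of Theorem \ref{thm:char}) to rule out blocking behaviour at $a_{n_0-1}$ for $a \in U_A$ and at $a_0$ for $a \in \Nbr(U_B)$. Everything else is routine bookkeeping transcribed from Theorem \ref{thm:conv1}.
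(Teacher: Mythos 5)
Your proposal is correct and follows essentially the same route as the paper: build $S$ from the dual certificate of Theorem~\ref{thm:main}, reuse the Theorem~\ref{thm:conv1} argument for edges inside $E'$, and invoke properties~(i) and (ii) for edges meeting $U$. The only cosmetic difference is that you justify the remaining preference facts (that a node in $\Nbr(U)$ prefers its $M$-partner to its unmatched neighbor) via condition~(2) of Theorem~\ref{thm:char} on a length-two alternating path, whereas the paper uses an equivalent direct swap argument.
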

\begin{proof}
Let $U$ be the set of nodes left unmatched in $M$.
A useful observation is that every node $v$ in $\Nbr(U)$ has to be matched in $M$ to some neighbor
that $v$ prefers to all its neighbors in $U$. Otherwise by replacing $(v,w) \in M$ with
$(v,u)$, where $v$ prefers $u \in U$ to $w$, we get a maximum matching more popular than $M$.

In order to construct the matching $S$ in $G^*$, we will use the vector $\vec{\alpha}$ described in Theorem~\ref{thm:main}.
Let $S = \cup_{i=0}^{n_0-1}\{(a_i,\tilde{b}): (a,b) \in M \ \text{and}\ \alpha_a = -2i, \alpha_b = 2i\} \cup \{$necessary edges incident to dummy nodes in $G^*\}$ where for each $a \in A'$: if 
$\alpha_a = -2i$ then the necessary edges incident to dummy nodes corresponding to $a$
are $(a_j,d_{j+1}(a))$ for $0 \le j \le i-1$ and $(a_j,d_{j}(a))$ for $i+1 \le j \le n_0-1$.
It is easy to check that $S' = M$.

We will now show that $S$ is a stable matching in $G^*$.
Consider any edge $(a_j,\tilde{w})$ in $G^*$ where $(a,w) \in E'$ and $0 \le j \le n_0-1$. The proof of Theorem~\ref{thm:conv1} shows
that $(a_j,\tilde{w})$ does not block $S$. 
Consider any edge $(a,w)$ in $E \setminus E'$.

\begin{itemize}
\item Suppose $a \in U_A$.
So $w \in \Nbr(U_A)$: hence $\alpha_w = 2(n_0-1)$ by property~(ii). So $(z_{n_0-1},\tilde{w}) \in S$
for some neighbor $z$ that $w$ prefers to $a$. Thus $(a_{n_0-1},\tilde{w})$ does not block $S$.
For $i \in\{0,\ldots,n_0-2\}$, none of the edges $(a_i,\tilde{w})$ can block~$S$ (by $\tilde{w}$'s preference order).

\item Suppose $w \in U_ B$. So $a \in \Nbr(U_B)$: hence $\alpha_a = 0$ by property~(i).
Thus $(a_0,\tilde{b}) \in S$ for some neighbor $b$ that $a$ prefers to $w$. 
Hence  $(a_0,\tilde{w})$ does not block $S$. Moreover, none of the edges $(a_i,\tilde{w})$ for $i\in\{1,\ldots,n_0-1\}$ can
block $S$ since $a_1,\ldots,a_{n_0-1}$ are matched to their respective top choice neighbors $d_1(a),\ldots,d_{n_0-1}(a)$.
\end{itemize}
Finally, no edge incident to a dummy node blocks $S$. Hence $S$ is stable in $G^*$. \qed
\end{proof}

Thus there is a linear map that is surjective from the set of stable matchings in $G^*$ to the set of popular max-matchings in $G$
(by Theorem~\ref{thm:stable} and Theorem~\ref{lem:surjective}).
Hence the stable matching polytope of $G^*$ is an extension of the popular 
max-matching polytope of $G$. 

\subsection{A compact extended formulation for the popular max-matching polytope}

The formulation of the stable matching polytope of $G^*$ (from \cite{Rot92}) is given below.
For any node $u$ in $G^*$, the set $\{v' \succ_{u} v\}$ is the set of all neighbors of $u$ in $G^*$ that it prefers to $v$.
Let $\delta^*(u)$ denote the set of edges incident to $u$ in $G^*$.
Let $T = \cup_{a\in A}(\{a_0,\ldots,a_{n_0-2}\}\cup\{d_1(a),\ldots,d_{n_0-1}(a)\})$: every node in $T$ has to be matched in all stable
matchings in $G^*$.

\begin{eqnarray*}
\sum_{w \succ_{a_i} \tilde{b}} x_{(a_i,w)} \ + \ \sum_{z \succ_{\tilde{b}} a_i} x_{(z,\tilde{b})} \ + \ x_{(a_i,\tilde{b})} & \ \ge \ & \ 1 \ \ \ \ \ \ \ \ \ \ \ \forall (a_i,\tilde{b}) \in E^*\\
\ \ \ \ \ \sum_{e\in\delta^*(u)}x_e \  \ \le \  \ 1 \ \ \ \forall u\in A^*\cup B^* \ \ \ \ \ \ \ \ \text{and}\ \ \ \ \ \ \ \ \ x_e \ \ &\ge& \ \ 0 \ \ \ \ \ \ \ \ \ \ \ \ \forall e \in E^*\\
\sum_{e \in \delta^*(u)} x_e   \  \ = \  \ 1 \ \ \ \ \forall \ u \in T \ \ \ \ \ \text{and}\ \ \ \ \ \ x_{(a,b)} \  & = & \ \sum_{i=0}^{n_0-1} x_{(a_i,\tilde{b})} \ \ \forall (a,b) \in E.
\end{eqnarray*}

\begin{itemize}
\item The topmost constraint captures the {\em stability} constraint for edge $(a_i,\tilde{b}) \in E^*$ where $(a,b) \in E$ and 
$0 \le i \le n_0-1$.

\item The constraints in the second line capture that $\vec{x}$ belongs to the matching polytope of $G^*$.

\item The constraints in the third line (on the left) capture
  the stability constraint for the edges $(a_{i-1},d_i(a))$ and  $(a_i,d_i(a))$ for $1 \le i \le n_0-1$.

\item The constraint $x_{(a,b)} = \sum_{i=0}^{n_0-1} x_{(a_i,\tilde{b})}$ for every $(a,b) \in E$ 
  captures the linear map from the edge set $E^*$ to the edge set $E$.
\end{itemize}

We showed that the polytope described by the above constraints is an extension of the popular max-matching polytope of $G$. 
Linear programming on the above formulation with $\min \sum_{e\in E}c(e)\cdot x_e$ as the objective function 
computes a  min-cost popular max-matching in $G$ in polynomial time.

\section{Min-cost Pareto-optimal (max-)matching}
\label{sec:hardness}
In this section we consider the problems of computing a min-cost Pareto-optimal matching and a min-cost Pareto-optimal max-matching 
in a marriage instance with edge costs in $\{0,1\}$. We will show these problems are NP-hard.

Given a 3SAT formula $\psi$, we will build an instance $G_{\psi}$ with edge costs in $\{0,1\}$ such that $G_{\psi}$ admits a Pareto-optimal 
matching of cost 0 if and only if $\psi$ is satisfiable. Any Pareto-optimal matching of cost~0  would have to be a perfect 
matching in $G_{\psi}$. Hence this will prove the NP-hardness of both the min-cost Pareto-optimal matching problem and the min-cost 
Pareto-optimal max-matching problem. 

Our reduction resembles a hardness reduction from \cite{CFKP18} that showed the NP-hardness of deciding if an instance $G$ has a stable 
matching $M$ that is also {\em dominant}, i.e., $M$ is more popular than every larger matching.
As done in this reduction, we will first transform $\psi$ so that every clause contains either only positive literals or only negative 
literals; moreover, there will be a single occurrence of each negative literal in the transformed $\psi$. This is easy to achieve:
\begin{itemize}
\item let $X_1,\ldots,X_n$ be the starting variables. For $i \in [n]$: replace all occurrences of $\neg X_i$ with the same variable
  $X_{n+i}$ (a new one) and add the two clauses $(X_i \vee X_{n+i}) \wedge (\neg X_i \vee \neg X_{n+i})$ to capture $\neg X_i \equiv X_{n+i}$.
  Thus there are $2n$ variables in the transformed $\psi$.
\end{itemize}

We build the graph $G_{\psi}$ as follows. There are two types of gadgets: those that correspond to positive clauses and those that correspond 
to negative clauses. Fig.~\ref{pareto-fig1:example} shows how a positive clause gadget with 3 literals looks like and 
Fig.~\ref{pareto-fig2:example} shows how a negative clause gadget looks like.

\begin{figure}[h]
\centerline{\resizebox{0.36\textwidth}{!}{\input{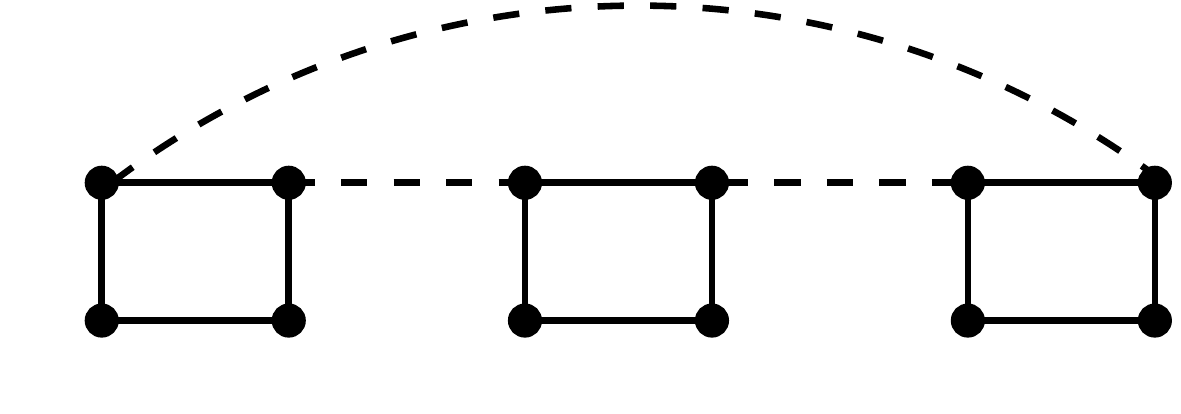_t}}}
\caption{The clause gadget for a positive clause $C_{\ell} = x \vee y \vee z$. Every occurrence of a literal in $\psi$ has a separate gadget. So we ought to use labels such as $a_{x,\ell},b_{x,\ell},\ldots$ here; for the sake of simplicity, we used the labels $a_x,b_x,\ldots$ here.}
\label{pareto-fig1:example}
\end{figure}

We now describe the preference lists of nodes in a positive clause $C_{\ell} = x \vee y \vee z$ (see Fig.~\ref{pareto-fig1:example}). 
The nodes $a_x,a'_x,b_x,b'_x$ occur in $x$'s gadget and $a_y,a'_y,b_y,b'_y$ occur in $y$'s gadget and
$a_z,a'_z,b_z,b'_z$ occur in $z$'s gadget: these gadgets are in the $\ell$-th clause gadget $C_{\ell}$. Every occurrence of a 
literal has a separate gadget. 

\begin{center}
	\begin{tabular}{c|c|c|c|c|c}
	\, $a_x$ \,              & \, $a'_x$ \, & \,   $a_y$ \,                & \, $a'_y$ \, & \, $a_z$ \,                & \, $a'_z$ \,\\ \hline \hline
	\, {$b_z$} \, & \, $b_x$ \,  & \, {$b_x$} \,   & \, $b_y$  \, & \, {$b_y$} \, & \, $b_z$ \, \\ \hline
	\, $b_x$ \,              & \, $b'_x$ \, & \,   $b_y$ \,                & \, $b'_y$ \, & \, $b_z$ \,                 & \, $b'_z$ \, \\ \hline
	\, {\color{red} $d'_x$} \, & \, -- \,     & \, {\color{red} $d'_y$} \,     & \, -- \,     &  \, {\color{red} $d'_z$} \,  & \, -- \,     \\ \hline
        \, $b'_x$ \,               & \, -- \,     & \, $b'_y$ \,                 & \, -- \,    &  \, $b'_z$ \,                  & \, -- \,
	\end{tabular}
	\hspace{1cm}
	\begin{tabular}{c|c|c|c|c|c}
	\, $b_x$ \,              & \, $b'_x$ \, & \,   $b_y$ \,                 & \, $b'_y$ \, & \, $b_z$ \,                & \, $b'_z$ \,\\ \hline \hline
	\, {$a_y$} \, & \, $a'_x$ \,  & \, {$a_z$} \, & \, $a'_y$  \, & \, {$a_x$} \, & \, $a'_z$ \, \\ \hline
	\, $a_x$ \,              & \, {\color{red} $c_x$} \, & \,   $a_y$ \,    & \, {\color{red} $c_y$} \, & \, $a_z$ \,   & \, {\color{red} $c_z$} \, \\ \hline
	\, $a'_x$ \,             & \, $a_x$ \,     & \, $a'_y$ \,               & \, $a_y$ \,     &  \, $a'_z$ \,  & \, $a_z$ \,     
	\end{tabular}
\end{center}

Here $a_x$'s top choice is $b_z$, second choice $b_x$, third choice $d'_x$, fourth choice $b'_x$, and similarly for other nodes.
For every occurrence of a positive literal $x$: there will be a pair of {\em consistency edges} -- the pair $(a_x,d'_x)$ and $(b'_x,c_x)$ 
in Fig.~\ref{pareto-fig3:example} -- between this gadget of $x$ and the unique gadget of $\neg x$. In our preferences,
the neighbors on consistency edges are marked in red.

\begin{figure}[h]
\centerline{\resizebox{0.2\textwidth}{!}{\input{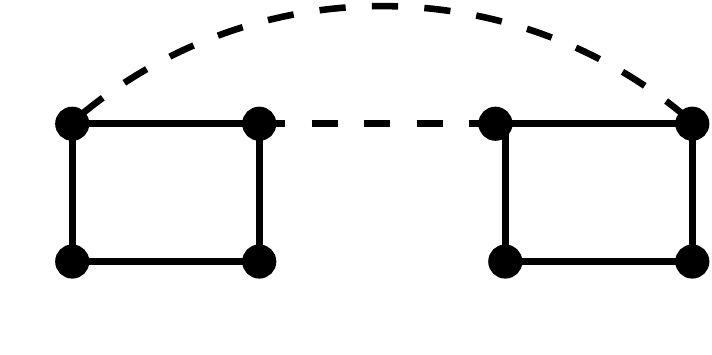_t}}}
\caption{A clause gadget corresponding to a negative clause $D_k = \neg x \vee \neg y$; due to our transformation of $\psi$, every negative clause has only 2 literals.}
\label{pareto-fig2:example}
\end{figure}

The preference lists of nodes that occur in a clause gadget with 2 positive literals will be totally analogous to the
preference lists of nodes in a clause gadget with 3 positive literals.
We will now describe the preference lists of nodes in a negative clause $k$---the overall picture here is given in 
Fig.~\ref{pareto-fig2:example}.

\begin{center}
	\begin{tabular}{c|c|c|c}
	\, $c_x$ \,                  & \, $c'_x$ \, & \,   $c_y$ \,                  & \,  $c'_y$  \,   \\ \hline \hline
	\, {$d_y$} \,     & \, $d_x$ \, & \, {$d_x$} \,     & \,  $d_y$  \,   \\ \hline
	\, $d_x$ \,                  & \, $d'_x$ \,  & \,    $d_y$ \,                 & \, $d'_y$  \,    \\ \hline
        \, {\color{red} $b'_{x,i}$} \,   & \, -- \,     & \, {\color{red} $b'_{y,i'}$} \,   & \,  --  \,   \\ \hline
	\, {\color{red} $\cdots$} \, & \, -- \,     & \, {\color{red} $\cdots$} \,   & \,  --  \,   \\ \hline	
        \, {\color{red} $b'_{x,j}$} \,   & \, -- \,     & \, {\color{red} $b'_{y,j'}$} \,   & \,  --  \,   \\ \hline
        \, $d'_x$ \,                 & \, -- \,     & \, $d'_y$                 \,   & \,  --    \, 
	\end{tabular}
	\hspace{1cm}
        \begin{tabular}{c|c|c|c}
	\, $d_x$ \,                  & \, $d'_x$ \,                 & \,   $d_y$ \,             & \,  $d'_y$  \,   \\ \hline \hline
	\, {$c_y$} \,   & \, $c'_x$ \,                 & \, {$c_x$} \,  & \,  $c'_y$  \,   \\ \hline
	\, $c_x$ \,                  & \, {\color{red} $a_{x,i}$}\,     & \,    $c_y$  \,           & \, {\color{red} $a_{y,i'}$}   \,   \\ \hline
        \, $c'_x$ \,                 & \, {\color{red} $\cdots$} \, & \,    $c'_y$ \,           & \, {\color{red} $\cdots$}  \,   \\ \hline
	\, --     \,                & \, {\color{red}  $a_{x,j}$}\,     & \,     --    \,           & \, {\color{red} $a_{y,j'}$}   \,   \\ \hline	
        \, --     \,                & \, $c_x$ \,                   & \,     --    \,           & \,  $c_y$  \,   
	\end{tabular}
\end{center}

The nodes $c_x,c'_x,d_x,d'_x$ and $c_y,c'_y,d_y,d'_y$ occur in the gadgets of $\neg x$ and $\neg y$, 
respectively. 
The nodes $b'_{x,i},\ldots,b'_{x,j}$ (resp., $b'_{y,i'},\ldots,b'_{y,j'}$) in the preference lists above are the $b'$-nodes in
the $x$-gadgets (resp., $y$-gadgets) in the various clauses that $x$ (resp., $y$) occurs in. Similarly, $a_{x,i},\ldots,a_{x,j}$ 
(resp., $a_{y,i'},\ldots,a_{y,j'}$) 
are the $a$-nodes in the $x$-gadgets (resp., $y$-gadgets) in the various clauses that $x$ (resp., $y$) occurs in. 
The preference order among the $b'$-nodes and among the $a$-nodes in these lists is not important.
The consistency edges between a gadget of $x$ and the gadget of $\neg x$ are shown in Fig.~\ref{pareto-fig3:example}.

\begin{figure}[h]
\centerline{\resizebox{0.67\textwidth}{!}{\input{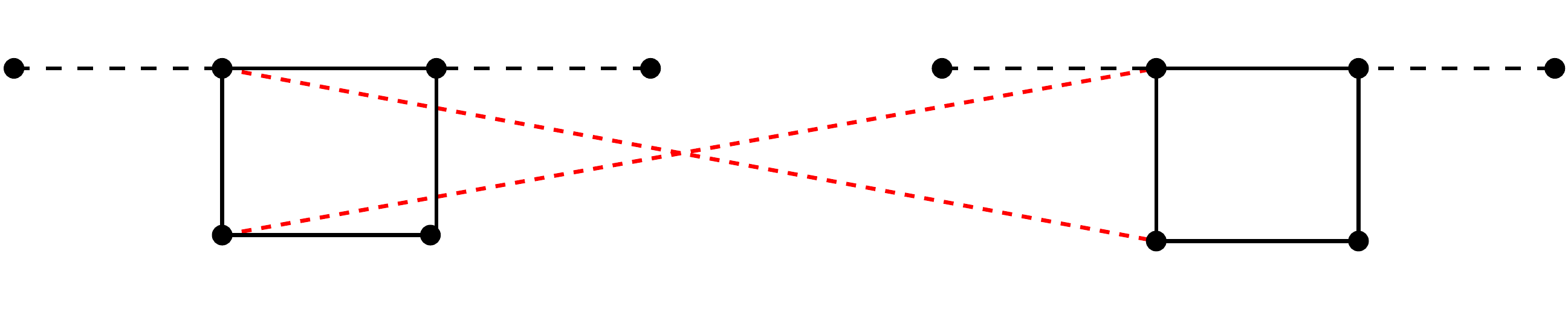_t}}}
\caption{For the sake of simplicity, we use $a_x,b_x,a'_x,b'_x$ to denote the 4 nodes in the gadget of $x$ in the $\ell$-th clause; $c_x,d_x,c'_x,d'_x$ are the 4 nodes in the unique gadget of $\neg x$. The consistency edges are the red dashed edges.} 
\label{pareto-fig3:example}
\end{figure}

\paragraph{\bf Edge costs.} For each edge $e$ in $G_{\psi}$, we will set $\cost(e) \in \{0,1\}$ as follows.
\begin{itemize}
\item For each variable $r \in \{X_1,\ldots,X_{2n}\}$: set $\cost(e) = 0$ where $e$ is any of the 4 edges in any literal gadget 
$\langle a_r,b_r,a'_r,b'_r\rangle$ of $r$ or any of the 4 edges in the gadget $\langle c_r,d_r,c'_r,d'_r\rangle$ of $\neg r$. 
\item For all other edges $e$, set $\cost(e) = 1$. 
\end{itemize}

In particular, for any edge $e$ in the consistency pair for any variable, we have $\cost(e) = 1$. 
In our figures, all dashed edges have cost~1 and all solid edges have cost 0.
Let $M$ be a Pareto-optimal matching in $G_{\psi}$ with $\cost(M) =  0$. So $M$ has to use only cost~0 edges.
Thus $M$ is forbidden to use any edge other than the 4 edges in the gadget of any 
literal. Moreover, since $M$ is Pareto-optimal, $M$ cannot leave two adjacent nodes unmatched. Thus for $r \in \{X_1,\ldots,X_{2n}\}$:

\begin{enumerate}
\item From a gadget of $r$ (say, on nodes $a_r,b_r,a'_r,b'_r$), either (i)~$(a_r,b_r),(a'_r,b'_r)$ are in $M$ or (ii)~$(a_r,b'_r)$, $(a'_r,b_r)$ are in $M$.
\item From the gadget of $\neg r$ (the nodes are $c_r,d_r,c'_r,d'_r$), either (i)~$(c_r,d_r),(c'_r,d'_r)$ are in $M$ or (ii)~$(c_r,d'_r)$, $(c'_r,d_r)$ are in $M$.
\end{enumerate}

Thus any Pareto-optimal matching in $G_{\psi}$ of cost 0 is a perfect matching. It is easy to show Lemma~\ref{pareto:consistency}. 

\begin{lemma}
\label{pareto:consistency}
Let $M$ be a Pareto-optimal matching in $G_{\psi}$. For any $r \in \{X_1,\ldots,X_{2n}\}$,
both $(a_r,b'_r)$ and $(c_r,d'_r)$ cannot simultaneously be in $M$.
\end{lemma}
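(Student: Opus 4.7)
The plan is to argue by contradiction: assuming both $(a_r,b'_r)$ and $(c_r,d'_r)$ lie in $M$, I will exhibit a Pareto improvement by swapping to the two consistency edges $(a_r,d'_r)$ and $(b'_r,c_r)$, which contradicts Pareto-optimality of $M$.

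First I would locate the four relevant partners in the preference lists. In the gadget of the literal $r$, the node $a_r$ ranks $b'_r$ last (its list ends $\ldots\succ d'_r\succ b'_r$), and $b'_r$ ranks $a_r$ last (its list is $a'_r\succ c_r\succ a_r$). In the gadget of $\neg r$, the node $c_r$ ranks $d'_r$ last (its list ends $\ldots\succ b'_{r,i}\succ\cdots\succ b'_{r,j}\succ d'_r$) and $d'_r$ ranks $c_r$ last (its list is $c'_r\succ a_{r,i}\succ\cdots\succ a_{r,j}\succ c_r$). So if both $(a_r,b'_r)$ and $(c_r,d'_r)$ are in $M$, then each of $a_r,b'_r,c_r,d'_r$ is matched to its least preferred acceptable partner.

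Next I would form $N = M \setminus\{(a_r,b'_r),(c_r,d'_r)\}\cup\{(a_r,d'_r),(b'_r,c_r)\}$. The two new edges are the consistency edges, which exist in $G_{\psi}$ by construction, and $N$ is still a matching since only the four nodes $a_r,b'_r,c_r,d'_r$ change partners. Comparing partners: $a_r$ moves from its last choice $b'_r$ to $d'_r$ (strictly better by $a_r$'s list); $b'_r$ moves from its last choice $a_r$ to $c_r$ (strictly better); $c_r$ moves from its last choice $d'_r$ to $b'_r$ (strictly better, since every $b'_{r,\cdot}$ outranks $d'_r$ on $c_r$'s list); and $d'_r$ moves from its last choice $c_r$ to $a_r$ (strictly better, since every $a_{r,\cdot}$ outranks $c_r$ on $d'_r$'s list). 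All other nodes have the same partners in $N$ as in $M$.

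Hence $N$ strictly dominates $M$ in the Pareto sense, contradicting the assumption that $M$ is Pareto-optimal. I do not anticipate a real obstacle here; the only subtlety is verifying that the swap indeed uses edges present in $G_{\psi}$ (the consistency edges $(a_r,d'_r),(b'_r,c_r)$), which is immediate from the gadget definitions.
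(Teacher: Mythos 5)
Your proposal is correct and matches the paper's proof: both form the alternating 4-cycle on $a_r,d'_r,c_r,b'_r$ and observe that swapping to the consistency edges $(a_r,d'_r),(b'_r,c_r)$ strictly improves all four nodes while leaving everyone else unchanged, contradicting $u(M)<\infty$. Your explicit verification from the preference lists that each of the four nodes is matched to its last choice is exactly the fact the paper relies on.
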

\begin{proof}
The preferences of the nodes are set such that if both $(a_r,b'_r)$ and $(c_r,d'_r)$ are in $M$ then both the non-matching edges 
$(a_r,d'_r)$ and $(b'_r,c_r)$ in the alternating cycle $\rho = a_r - d'_r - c_r - b'_r - a_r$ are blocking edges to $M$. 
Consider $M \oplus \rho$ versus $M$. All the 4 nodes $a_r, b'_r, c_r, d'_r$ prefer $M\oplus\rho$ to $M$ while the other nodes are
indifferent between $M \oplus \rho$ and $M$. 
Thus  $\phi(M\oplus\rho, M) = 4$ and $\phi(M, M\oplus\rho) = 0$, so $u(M) = \infty$.
This means $M$ is not Pareto-optimal, a contradiction.
Thus for any $r \in \{X_1,\ldots,X_{2n}\}$, we cannot have both $(a_r,b'_r)$ and $(c_r,d'_r)$ in $M$. \qed
\end{proof}

Theorem~\ref{thm:pareto-correctness} is our main result in this section. 

\begin{theorem}
\label{thm:pareto-correctness}
$G_{\psi}$ has a Pareto-optimal matching $M$ with $\cost(M) = 0$ if and only if $\psi$ is satisfiable. 
\end{theorem}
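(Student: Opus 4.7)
I would prove both directions using the structural rigidity already established: any cost-$0$ Pareto-optimal matching $M$ must be perfect and, in each literal gadget, must use either the \emph{inner} pair (e.g.\ $\{(a_r,b_r),(a'_r,b'_r)\}$) or the \emph{crossed} pair (e.g.\ $\{(a_r,b'_r),(a'_r,b_r)\}$); analogously in each $\neg r$-gadget. The natural encoding is to set $\sigma(r)$ to true iff the (unique) $\neg r$-gadget is inner. By Lemma~\ref{pareto:consistency}, a crossed positive occurrence of $r$ forces the $\neg r$-gadget to be inner, so this encoding is automatically consistent across the positive occurrences of $r$.

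\textbf{Necessity.} Assuming $M$ is Pareto-optimal with $\cost(M)=0$, I would suppose $\sigma$ falsifies some clause and exhibit a Pareto-improving alternating cycle. For a falsified positive clause $C_\ell = x\vee y\vee z$ (all three literal values false), the lemma (in contrapositive form) forces the three literal gadgets in $C_\ell$ to all be inner; then a short preference check on the six-cycle $a_x\,b_x\,a_y\,b_y\,a_z\,b_z\,a_x$, alternating between the matching edges $(a_x,b_x),(a_y,b_y),(a_z,b_z)$ and the external top-choice edges $(b_x,a_y),(b_y,a_z),(b_z,a_x)$, shows that every one of the six nodes strictly improves after swapping -- contradicting Pareto-optimality. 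The analogous four-cycle $c_x\,d_x\,c_y\,d_y\,c_x$ handles a falsified negative clause $D_k = \neg x\vee\neg y$, and the argument extends verbatim to positive clauses with two literals.

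\textbf{Sufficiency.} Conversely, given a satisfying $\sigma$, I would define $M$ by using the crossed matching on every literal occurrence that $\sigma$ sets to true and the inner matching on every other occurrence. This $M$ is perfect with cost $0$, and it is automatically consistent with Lemma~\ref{pareto:consistency}. Since $M$ is perfect, any $N$ weakly Pareto-dominating $M$ must also be perfect, so $M \oplus N$ decomposes into alternating cycles only; I would argue that every such cycle contains some node that is strictly worse off. Locally, the three atomic improving cycles -- the positive-clause six-cycle (or its smaller variant for two-literal clauses), the negative-clause four-cycle, and the consistency four-cycle $a_r\,b'_r\,c_r\,d'_r\,a_r$ -- are each blocked by our construction, since $\sigma$ satisfies every clause and is self-consistent. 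The main obstacle is ruling out global alternating cycles that stitch several gadgets together via external and consistency edges: I would handle this by tracking, each time the cycle enters and exits a gadget, the preference inequalities it imposes, and arguing that any consistent collection of such inequalities must embed one of the three atomic patterns, which is already blocked.
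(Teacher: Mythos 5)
Your necessity direction and your construction for the sufficiency direction coincide with the paper's: the same truth assignment read off the $\neg r$-gadgets, the same use of Lemma~\ref{pareto:consistency} (in contrapositive) to force all literal gadgets of a falsified clause to be inner, and the same improving six- and four-cycles. The one place where your argument has a genuine gap is the last step of sufficiency, which you yourself flag as ``the main obstacle'': ruling out improving alternating cycles that stitch several gadgets together. Your proposed resolution --- that any consistent collection of preference inequalities along such a cycle ``must embed one of the three atomic patterns'' --- is asserted rather than proved, and it is not the right reduction: a global cycle could, for instance, traverse a single consistency edge of one variable occurrence and a single consistency edge of a different variable occurrence, a configuration not covered by checking your three local cycles (your ``consistency four-cycle'' uses both consistency edges of the same occurrence simultaneously).

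The paper closes this gap with a short direct argument that your outline is missing. First, since preferences are strict, in any cycle $\rho$ with $\phi(M,M\oplus\rho)=0$ every node on $\rho$ strictly prefers its new partner, so every non-matching edge of $\rho$ must block $M$. Second, one checks from the construction that each consistency edge is individually non-blocking: when $r$ is set to $\mathsf{true}$, the node $d'_r$ holds its top choice $c'_r$ and $c_r$ prefers its partner $d_r$ to $b'_{r,i}$; when $r$ is set to $\mathsf{false}$, the node $a_{r,i}$ prefers its partner $b_{r,i}$ to $d'_r$ and $b'_{r,i}$ prefers $a'_{r,i}$ to $c_r$. Hence the subgraph consisting of matching edges and blocking edges contains no consistency edge, so every improving cycle is confined to a single clause gadget; there the satisfied literal's gadget contributes a non-blocking edge that breaks every candidate cycle (and the leftover crossed/inner edge inside each literal gadget is likewise non-blocking). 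With this observation inserted your outline becomes a complete proof; without it, the multi-gadget case is not actually handled.
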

\begin{proof}
Suppose $G_{\psi}$ has a Pareto-optimal matching $M$ with $\cost(M) = 0$.
Lemma~\ref{pareto:consistency} indicates a natural way of defining an assignment for $\psi$.
For any variable $r \in \{X_1,\ldots,X_{2n}\}$, consider the edges in $\neg r$'s gadget that are in $M$. 
If $(c_r,d'_r),(c'_r,d_r)$ are in $M$ then set $r = \mathsf{false}$ else set $r = \mathsf{true}$. 

Lemma~\ref{pareto:consistency} tells us that when $r = \mathsf{false}$, the edges $(a_{r,i},b_{r,i}),(a'_{r,i},b'_{r,i})$ 
from $r$'s gadget in the $i$-th clause are in $M$ (where $a_{r,i},b_{r,i},a'_{r,i},b'_{r,i}$ are the 4 nodes from $r$'s gadget in the $i$-th
clause). 

\begin{new-claim}
  \label{clm:app1}
The above assignment satisfies $\psi$.
\end{new-claim}

Claim~\ref{clm:app1} uses the Pareto-optimality of $M$ to show that every clause has at least one literal set to $\mathsf{true}$. Its proof is given after the proof of Theorem~\ref{thm:pareto-correctness}. Hence if $G_{\psi}$ admits a Pareto-optimal matching $M$ with $\cost(M) = 0$, then $\psi$ is satisfiable.

\smallskip

\noindent{\em The converse.}
We will now show that if $\psi$ is satisfiable then there is a Pareto-optimal matching $M$ in $G_{\psi}$ such that $\cost(M) = 0$.
There is a natural way of constructing the matching $M$---we will use the satisfying assignment for $\psi$ to choose edges from each 
literal gadget. For any variable $r$, include the following edges in the matching $M$:
\begin{itemize}
\item if $r = \mathsf{true}$ then take the pair of edges $(c_r,d_r),(c'_r,d'_r)$ from $\neg r$'s gadget and the pair of edges 
$(a_{r,i},b'_{r,i})$, $(a'_{r,i},b_{r,i})$ from $r$'s gadget in clause~$i$ (for every clause $i$ that $r$ belongs to). 
\item if $r = \mathsf{false}$ then take the pair of edges $(c_r,d'_r),(c'_r,d_r)$ from $\neg r$'s gadget and the pair of edges 
$(a_{r,i},b_{r,i})$, $(a'_{r,i},b'_{r,i})$ from $r$'s gadget in clause~$i$ (for every clause $i$ that $r$ belongs to). 
\end{itemize}

It is easy to see that $\cost(M) = 0$. Since $M$ is a perfect matching, there is no alternating path $\rho$ wrt $M$ such that 
$\phi(M, M\oplus\rho) = 0$. This is because for every alternating path $\rho$ wrt $M$, we have
$|M\oplus\rho| < |M|$ and the nodes matched in $M$ and unmatched in $M\oplus\rho$ prefer $M$ to $M\oplus\rho$, so
$\phi(M, M\oplus\rho) > 0$.
Hence in order to prove $M$'s Pareto-optimality, what we need to show is Claim~\ref{clm:app2}.

\begin{new-claim}
  \label{clm:app2}
There is no alternating cycle $\rho$ with respect to $M$ such that $\phi(M\oplus\rho,M) > 0$ and $\phi(M,M\oplus\rho) = 0$.
\end{new-claim}

The proof of Claim~\ref{clm:app2} is given below. This finishes the proof of Theorem~\ref{thm:pareto-correctness}. \qed
\end{proof}

\paragraph{\bf Proof of Claim~\ref{clm:app1}.}
Suppose this assignment does not satisfy $\psi$. We have 3 cases here.
\begin{enumerate}
\item Let $C_i = x \vee y \vee z$. Suppose all the three variables $x, y, z$ are in false state. Consider the following alternating 
cycle $\rho$ wrt $M$: 
\[b_{z,i} - (a_{x,i},b_{x,i}) - (a_{y,i},b_{y,i}) - (a_{z,i},b_{z,i}) - a_{x,i}.\] 

All non-matching edges in this alternating cycle, i.e., the edges $(b_{z,i},a_{x,i})$,  $(b_{x,i},a_{y,i})$, $(b_{y,i},a_{z,i})$,
are blocking edges with respect to $M$. In the $M \oplus \rho$ versus $M$ comparison, these 6 nodes 
$a_{x,i}, b_{x,i}, a_{y,i}, b_{y,i}, a_{z,i}, b_{z,i}$ prefer $M \oplus \rho$ to $M$ while all other nodes in $G_{\psi}$ are indifferent 
between $M \oplus \rho$ and $M$. 
Thus we have $\phi(M\oplus\rho, M) = 6$ and $\phi(M, M\oplus\rho) = 0$. Hence $u(M) = \infty$, contradicting the Pareto-optimality of $M$.

\smallskip

\item Let $C_j = x \vee y$, i.e., this is a positive clause with 2 literals. Suppose both $x$ and $y$ are in false state.
Consider the following alternating cycle $\rho$ wrt $M$:
\[ b_{y,j} - (a_{x,j},b_{x,j}) - (a_{y,j},b_{y,j}) - a_{x,j}.\] 

In the $M \oplus \rho$ versus $M$ comparison, the 4 nodes $a_{x,j}, b_{x,j}, a_{y,j}, b_{y,j}$ prefer $M \oplus \rho$ to $M$ 
while all the other nodes in $G_{\psi}$ are indifferent between $M\oplus\rho$ and $M$. Thus $\phi(M\oplus\rho, M) = 4$ and
$\phi(M, M\oplus\rho)=0$. Hence $u(M) = \infty$, contradicting the Pareto-optimality of $M$.

\smallskip

\item Let $D_k = \neg x \vee \neg y$. Suppose both $\neg x$ and $\neg y$ are in false state.
Consider the following alternating cycle $\rho$ wrt $M$: 
\[d_y - (c_x,d_x) - (c_y,d_y) - c_x.\] 

In the $M \oplus \rho$ versus $M$ comparison, the 4 nodes $c_x, d_x, c_y, d_y$ prefer $M \oplus \rho$ to $M$ 
while all the other nodes in $G_{\psi}$ are indifferent between $M\oplus\rho$ and $M$. So $\phi(M\oplus\rho, M) = 4$ and 
$\phi(M, M\oplus\rho)=0$. Hence $u(M) = \infty$, contradicting the Pareto-optimality of $M$. 
\end{enumerate}

Thus every clause in $\psi$ has at least one literal in true state.  \qed

\paragraph{\bf Proof of Claim~\ref{clm:app2}.}
We need to show there is no alternating cycle $\rho$ wrt $M$ such that $\phi(M\oplus\rho,M) > 0$ and $\phi(M,M\oplus\rho) = 0$.  
Every non-matching edge in such an alternating cycle $\rho$ has to be a blocking edge wrt $M$. 
First, we argue that every consistency edge is a {\em non-blocking} edge to $M$; say, this is a consistency edge corresponding 
to variable $r$ in clause~$i$. It follows from our construction of $M$ that $M$ contains either:
\begin{enumerate} 
\item $(a_{r,i},b'_{r,i}), (a'_{r,i},b_{r,i})$ and $(c_r,d_r), (c'_r,d'_r)$ or 
\item $(a_{r,i},b_{r,i}), (a'_{r,i},b'_{r,i})$ and $(c_r,d'_r), (c'_r,d_r)$.
\end{enumerate}

\begin{itemize}
\item In case~1: the node $d'_r$ prefers $M(d'_r) = c'_r$ to $a_{r,i}$ and the node $c_r$ prefers $M(c_r) = d_r$ to $b'_{r,i}$. 
\item In case~2: the node $a_{r,i}$ prefers $M(a_{r,i}) = b_{r,i}$ to $d'_r$ and the node $b'_{r,i}$ prefers $M(b'_{r,i}) = a'_{r,i}$ to $c_r$. 
\end{itemize}
Thus in both cases, the consistency edges $(a_{r,i},d'_r)$ and $(b'_{r,i},c_r)$ are non-blocking edges to $M$. Let $H$ be the subgraph 
of $G_{\psi}$ obtained by preserving only edges in $M$ and blocking edges wrt $M$. Thus no non-blocking edge (other than edges in $M$)
is included in $H$ -- so no consistency edge belongs to $H$.

Since there are no consistency edges in $H$, any alternating cycle in $H$ has to be contained within a single clause. 
We will now show there is no such cycle in $H$ by using the fact that we constructed $M$ using a satisfying assignment for $\psi$:
thus every clause has at least one literal set to $\mathsf{true}$. 

Let $C = x \vee y \vee z$ and suppose $y = \mathsf{true}$ in $\psi$. Then $(a_y,b'_y)$ and $(a'_y,b_y)$ are in $M$, however the edge  
$(a'_y,b'_y)$ is non-blocking wrt $M$ and hence it is missing in $H$. Thus there is no alternating cycle in $H$ that is contained
within the clause $C$. Now consider a negative clause $D = \neg x \vee \neg y$ and suppose $x = \mathsf{false}$ in $\psi$.
Then $(c_x,d'_x)$ and 
$(c'_x,d_x)$ are in $M$, however the edge  $(c'_x,d'_x)$ is non-blocking wrt $M$ and it is missing in $H$. Thus there is no alternating 
cycle in $H$ that is contained within the clause $D$.

Consider the 4 edges of any literal gadget (say, $r$) in $G_{\psi}$: if $(a_r,b_r) \in M$ then 
$(a_r,b'_r)$ is a non-blocking edge wrt $M$ and if $(a'_r,b_r) \in M$ then $(a'_r,b'_r)$ is a non-blocking edge wrt $M$. 
Similarly, in the gadget of $\neg r$: if $(c_r,d_r) \in M$ then $(c_r,d'_r)$ is a non-blocking edge wrt $M$ 
and if $(c'_r,d_r) \in M$ then $(c'_r,d'_r)$ is a non-blocking edge wrt $M$.
Thus there is no alternating cycle wrt $M$ in $H$.
So there is no alternating cycle $\rho$ in $G_{\psi}$ such that $\phi(M\oplus\rho, M) > 0$ and $\phi(M, M\oplus\rho) = 0$. \qed

\medskip

Observe that any Pareto-optimal matching in $G_{\psi}$ of cost 0 is a perfect
matching. Thus Theorem~\ref{thm:pareto-correctness} shows that the min-cost Pareto-optimal matching problem 
and the min-cost Pareto-optimal max-matching problem are NP-hard.
Moreover, these problems are NP-hard to approximate to any multiplicative factor. Thus we have shown the following theorem.

\begin{theorem}
\label{thm:pareto-opt}
Given $G = (A \cup B, E)$ with strict preferences and edge costs in $\{0,1\}$,
it is NP-hard to compute a min-cost Pareto-optimal matching in $G$;
moreover, it is NP-hard to approximate this within any multiplicative factor.
\end{theorem}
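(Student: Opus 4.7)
My plan is to derive Theorem~\ref{thm:pareto-opt} as an almost immediate consequence of Theorem~\ref{thm:pareto-correctness}, since the construction $\psi \mapsto G_\psi$ already provides a polynomial-time reduction from 3SAT to deciding whether $G_\psi$ admits a Pareto-optimal matching of cost $0$. First I would observe that this decision problem is a special case of the min-cost Pareto-optimal matching problem (with edge costs in $\{0,1\}$): deciding whether the optimum equals $0$ solves 3SAT, so min-cost Pareto-optimal matching is NP-hard.

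To obtain the max-matching version, I would recall the remark, made just before the theorem statement, that every cost-$0$ Pareto-optimal matching in $G_\psi$ is perfect. This is because the only cost-$0$ edges lie inside the four-node literal and negated-literal gadgets, and Pareto-optimality forbids leaving two adjacent nodes of such a gadget unmatched (otherwise we could cover them by a cost-$0$ matching edge and strictly improve the matching on those two nodes while leaving everybody else unchanged). Hence $G_\psi$ has a cost-$0$ Pareto-optimal matching if and only if it has a cost-$0$ Pareto-optimal max-matching, so the same reduction yields NP-hardness for the max-matching variant.

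For the inapproximability claim, I would exploit the $0$-versus-$\geq 1$ gap built into the construction. If $\psi$ is satisfiable then the optimum cost is $0$ by Theorem~\ref{thm:pareto-correctness}, while if $\psi$ is unsatisfiable, then by the same theorem no Pareto-optimal (max-)matching of cost~$0$ exists; since all edge costs lie in $\{0,1\}$, every Pareto-optimal (max-)matching must then have cost at least~$1$. For any multiplicative factor $\alpha \geq 1$, an $\alpha$-approximation algorithm would return a solution of cost $\leq \alpha \cdot 0 = 0$ in the satisfiable case and of cost $\geq 1$ in the unsatisfiable case, which would decide 3SAT in polynomial time.

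I do not expect a genuine obstacle: all the combinatorial content sits in the gadget analysis of Theorem~\ref{thm:pareto-correctness}. The only subtlety is confirming the structural claim that a cost-$0$ Pareto-optimal matching in $G_\psi$ is automatically perfect, which I would dispatch by a short case check on each gadget, using that leaving any matched pair of gadget nodes unmatched produces an obvious Pareto-improvement via a cost-$0$ edge.
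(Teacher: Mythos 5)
Your proposal is correct and matches the paper's own derivation essentially verbatim: the paper likewise obtains Theorem~\ref{thm:pareto-opt} from Theorem~\ref{thm:pareto-correctness} by noting that any cost-$0$ Pareto-optimal matching in $G_{\psi}$ is perfect (so the matching and max-matching variants coincide on the cost-$0$ instances) and by exploiting the $0$-versus-$\ge 1$ cost gap to rule out any multiplicative approximation. No gaps; all the real work is in Theorem~\ref{thm:pareto-correctness}, exactly as you say.
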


\smallskip

\noindent{\em Acknowledgments.} Thanks to Yuri Faenza for helpful comments that improved the presentation.

\newpage

\section*{Appendix}
\paragraph{\bf Missing details from the proof of Theorem~\ref{thm:stable}.}
The proofs of properties~1-6 are given below.

\paragraph{Proof of property~1.}
The inclusion $S' \subseteq \cup_{i=0}^{n_0-1} (A_i\times B_i)$ follows from the definition of the sets $B_i$. Recall that for 
$1 \le i \le n_0-1$, $B_i$ is the set of nodes $b$ such that $(a_i,\tilde{b}) \in S$ for some $a \in A_i$. Also, $B_0$ 
contains all nodes $b$ such that $(a_0,\tilde{b}) \in S$ for some $a \in A_0$. Thus $S' \subseteq \cup_{i=0}^{n_0-1} (A_i\times B_i)$.

     The stability of $S'$ restricted to each set $A_i \cup B_i$
     is by $\tilde{b}$'s preference order in $G^*$. Recall that within subscript~$i$ neighbors, the order of preference for $\tilde{b}$ 
     in $G^*$ is $b$'s
     order of preference in $G$. Thus the stability of $S$ in $G^*$ implies the stability of $S'$ restricted to $A_i \cup B_i$ for each $i$.

\paragraph{Proof of property~2.}
Let $a \in A_{i+1}$. Then $(a_{i+1},d_{i+1}(a)) \notin S$. So it has to be the case that 
     $(a_i,d_{i+1}(a)) \in S$. Recall that $d_{i+1}(a)$ is $a_i$'s least preferred neighbor in $G^*$. So $a_i$ prefers $\tilde{b}$ to its 
     partner in $S$. Hence it follows from the stability of $S$ in $G^*$ that $\tilde{b}$ prefers its partner in $S$ (this is a 
     subscript~$i$ node $z_i$) to $a_i$, i.e., $b$ prefers $z$ to $a$.

       Since $\tilde{b}$ prefers subscript~$i+1$ nodes to subscript~$i$ nodes, $\tilde{b}$ prefers $a_{i+1}$ to its partner $z_i$ 
       in $S$. It follows from the stability of $S$ in $G^*$ that $a_{i+1}$ has to prefer its partner $\tilde{w}$ in $S$ to $\tilde{b}$,
       otherwise $(a_{i+1},\tilde{b})$ would block $S$. Hence $a$ prefers $w$ to $b$. Thus $\wt_{S'}(a,b) = -2$.

\paragraph{Proof of property~3.}
     Suppose $a \in A_i$ where $i \ge j+2$ and $b \in B_j$. So the edge $(a_{j+1}, d_{j+2}(a)) \in S$.
     Since $d_{j+2}(a)$ is $a_{j+1}$'s least preferred neighbor in $G^*$,
     the stability of $S$ implies that $\tilde{b}$ prefers its partner in $S$ to $a_{j+1}$.
     However $b \in B_j$ and so $\tilde{b}$'s partner in $S$ is a subscript~$j$ node $z_j$. This contradicts $\tilde{b}$'s 
     preference order that it prefers any subscript~$j+1$ neighbor to a subscript~$j$ neighbor.
     Thus there is no edge $(a,b)$ in $G$ with $a \in A_i$ and $b \in B_j$ where $i \ge j+2$.

\paragraph{Proof of property~4.}
     This follows from properties 1, 2, and 3 given above. 
     Properties~2 and 3 tell us that there is no blocking edge in $A_i \times B_j$ where $i \ge j+1$.
     Property~1 tells us that there is no blocking edge in $A_i \times B_i$ for any $i$.
     So any blocking edge to $S'$ has to be in $A_i \times B_j$ where $i \le j-1$.

\paragraph{Proof of property~5.}
         This follows from the definitions of the sets $A_0,\ldots,A_{n_0-2}$ and $B_1,\ldots,B_{n_0-1}$. For each $a \in A_i$ where 
         $0 \le i \le n_0-2$: we have $(a_i,\tilde{b}) \in S$ for some $\tilde{b} \in \tilde{B}$ and thus $(a,b) \in S'$. 
         Similarly, for each $b \in B_j$ where $1 \le j \le n_0-1$: we have $(a_j,\tilde{b}) \in S$ for some $a \in A_j$ and 
         thus $(a,b) \in S'$. Hence all nodes that are unmatched in $S'$ are in $A_{n_0-1} \cup B_0$.

\paragraph{Proof of property~6.}
  Suppose $S'$ is not a maximum matching in $G$. Then there is an augmenting path $\rho$ with respect to $S'$. Let us refer to an edge $e$ 
  that satisfies $\wt_{S'}(e) = -2$ as a {\em negative} edge. The endpoints of a negative edge prefer their respective assignments in $S'$
  to each other. We know from  property~5 above that all nodes in $A$ that are unmatched in $S'$ are in $A_{n_0-1}$ and all nodes
  in $B$ that are unmatched in $S'$ are in $B_0$. We also know that $S' \subseteq \cup_i (A_i \times B_i)$ (by prop.~1 above).
  Moreover, all edges $e$ in $A_{j+1} \times B_j$ are negative edges (by prop.~2 above) and there is no edge in $A_i \times B_j$
  where $i \ge j+2$ (by prop.~3 above).

  Thus the path $\rho$ starts in $A_{n_0-1}$ at an unmatched node $a$ and since there cannot be any negative edge incident to an unmatched
  node, all of $a$'s neighbors have to be in $B_{n_0-1}$: this is because every edge $e$ in $A_{n_0-1}\times B_{n_0-2}$ is a negative edge.
  The matched partners of $a$'s neighbors are in $A_{n_0-1}$. Then the next node can be in $B_{n_0-2}$ (this is by prop.~3 above) and its partner
  is in $A_{n_0-2}$ and so on. Finally, there is no edge from $A_1$ to an unmatched node in $B_0$: this is because there is no negative edge
  incident to an unmatched node and we know all edges in $A_1 \times B_0$ are negative edges (by prop.~2 above).

  So the {\em shortest} alternating path $\rho$ from an unmatched $a \in A_{n_0-1}$ to an unmatched $b \in B_0$ moves across sets as follows:
  $A_{n_0-1} - B_{n_0-1} - A_{n_0-1} - B_{n_0-2} - A_{n_0-2} - B_{n_0-3} - \cdots - A_1 - B_0 - A_0 - B_0$.
  This implies there are at least $n_0+1$ nodes in $A$. However $|A| = n_0$. So there is no such alternating path, i.e., 
  there is no augmenting path with respect to $S'$. In other words, $S'$ is a maximum matching in $G$. \qed

  \paragraph{\bf Proof of Claim~\ref{claim2}.}
    Let $v_0$ be the first node that moves to $A_{n_0-1}$ with a blocking edge incident to it. Recall our update procedure --
    we initially added nodes in $\Nbr(U_A)$ and their partners in $M$ to $B_{n_0-1}$ and $A_{n_0-1}$, respectively.
    Then we applied rules~1-3 in some order and this resulted in the node $v_0$ moving to $A_{n_0-1}$.
    Corresponding to these rules, we will construct an alternating path
    $p = v_0 - M(v_0) - v_1 - M(v_1) - v_2 - \cdots - v_k - M(v_k) - u$  between $v_0$ and some node $u \in U_A$.
    A useful observation is that all the nodes $v_0,\ldots,v_k$ have moved upwards to their respective levels due to our
    update of $\vec{\alpha}$ to ensure property~(ii) and to maintain $\vec{\alpha}$ as an optimal solution to \eqref{LP4}.
    
    The path $p$ can be partitioned into $k+1$ pairs of edges for some $k \ge 0$. For $0 \le i \le k-1$: the $i$-th pair consists of the
    matching edge $(v_i,M(v_i))$ of weight 0 and the non-matching edge $e_i = (M(v_i),v_{i+1})$
    where $v_{i+1}$ is the node that pulled $M(v_i)$ and $v_i$ to their current level due to the application of one
    of the above three rules. Rule~1 implies $\wt_M(e_i) = 0$ while rule~2 implies $\wt_M(e_i) = -2$ and rule~3 implies $\wt_M(e_i) = 2$.

    Observe that rule~1 places $M(v_i)$ in the same level as $v_{i+1}$ while rule~2 places $M(v_i)$ one level
    lower than $v_{i+1}$ and rule~3 places $M(v_i)$ one level higher than $v_{i+1}$.
    The last pair of edges in $p$ are $(v_k, M(v_k)) \in A_{n_0-1} \times B_{n_0-1}$ and $(M(v_k),u)$, where the latter edge has weight $0$. 
    The node $v_0$ is in level~$n_0-1$ and the node $M(v_k)$ is also in level $n_0-1$.
    So $v_0$ and $v_k$ are at the same level, hence the number of edges in $p$ of weight $-2$ is exactly the same as the number of edges
    of weight $2$, thus $\wt_M(p) = 0$.

    Suppose there is a blocking edge $(v_0,w)$. If $w$ belongs to $p$, then it is easy to see that the alternating
    cycle $C$ obtained by joining the endpoints of the $v_0$-$w$ subpath in $p$ with the edge $(v_0,w)$ satisfies $\wt_M(C) \ge 2$.
    This contradicts Theorem~\ref{thm:char} since
    $M$ is a popular max-matching. Hence $w$ does not belong to path $p$.
    So let us add the 2-edge path $M(w) - w - v_0$ as a prefix to the $v_0$-$u$ path $p$ and call this alternating path $q$: we have
    $\wt_M(q) = \wt_M(p) + \wt_M(v_0,w) = 2$. 
    Since $\wt_M(q) > 0$ and the unmatched node $u$ is an endpoint of $q$, this again contradicts Theorem~\ref{thm:char}.
    Hence there is no blocking edge incident to $v_0$. \qed
\end{document}